
\typeout{IJCAI--23 Instructions for Authors}


\documentclass{article}
\pdfpagewidth=8.5in
\pdfpageheight=11in

\usepackage{ijcai23}

\usepackage{times}
\usepackage{soul}
\usepackage{url}
\usepackage[hidelinks]{hyperref}
\usepackage[utf8]{inputenc}
\usepackage[small]{caption}
\usepackage{graphicx}
\usepackage{tikz}
\usepackage{pgfplots}
\pgfplotsset{width=10cm,compat=1.9}
\usepackage{amsmath}
\usepackage{amsthm}
\usepackage{booktabs}
\usepackage{algorithm}
\usepackage{algorithmic}
\usepackage[switch]{lineno}

\allowdisplaybreaks
\sloppy


\urlstyle{same}



\usepackage{amssymb,amsmath,amsthm}
\usepackage{bbm}
\usepackage[capitalise]{cleveref}

\newtheorem{lemma}{Lemma}
\newtheorem{theorem}[lemma]{Theorem}

\newtheorem{cor}[lemma]{Corollary}

\newcommand{\E}{\mathbb{E}}
\newcommand{\R}{\mathbb{R}}

\usepackage{xcolor}





\pdfinfo{
/TemplateVersion (IJCAI.2023.0)
}

\title{Outsourcing Adjudication to Strategic Jurors}

\author{
    Ioannis Caragiannis \and Nikolaj I. Schwartzbach\thanks{This project is funded by VILLUM FONDEN under the Villum Kann Rasmussen Annual Award in Science and Technology under grant agreement no 17911.}
    \affiliations
    Department of Computer Science, Aarhus University
    \emails
    \{iannis,nis\}@cs.au.dk
}

\begin{document}

\maketitle

\begin{abstract}
We study a scenario where an adjudication task (e.g., the resolution of a binary dispute) is outsourced to a set of agents who are appointed as jurors. This scenario is particularly relevant in a Web3 environment, where no verification of the adjudication outcome is possible, and the appointed agents are, in principle, indifferent to the final verdict. We consider simple adjudication mechanisms that use (1) majority voting to decide the final verdict and (2) a payment function to reward the agents with the majority vote and possibly punish the ones in the minority. Agents interact with such a mechanism strategically: they exert some effort to understand how to properly judge the dispute and cast a yes/no vote that depends on this understanding and on information they have about the rest of the votes. Eventually, they vote so that their utility (i.e., their payment from the mechanism minus the cost due to their effort) is maximized. Under reasonable assumptions about how an agent's effort is related to her understanding of the dispute, we show that appropriate payment functions can be used to recover the correct adjudication outcome with high probability. Our findings follow from a detailed analysis of the induced strategic game and make use of both theoretical arguments and simulation experiments.
\end{abstract}

\section{Introduction}
We consider the problem of incentivizing jurors to properly assess case evidence, so that the resulting adjudication is better than random. The problem is motivated by dispute resolution in Web3 systems, where a reliable solution would find numerous applications in, e.g., supply chain management, banking, and commerce \cite{icbc}.

Web3 typically assumes no trusted authorities and adjudication must therefore be delegated to ordinary users (or agents), who are appointed as jurors and get compensated for this activity. Such agents are anonymous and cannot easily be held accountable for their actions. They are largely indifferent to the outcome of the adjudication case and typically strategize to maximize their utility. As such, paying a fixed reward to the agents for their participation is insufficient; they can then just vote randomly, without putting in any effort to assess the case evidence, producing a useless adjudication outcome. Instead, to produce a non-trivial adjudication, payments to/from the agents should be in some way conditioned on their vote. Hopefully, if the agents are satisfied with their payments, they will make a reasonable effort to assess the case evidence and collectively come up with a correct adjudication. We ask the following natural question.

\begin{quote}
    \em How can payments be structured to motivate strategic jurors to collectively produce a correct adjudication when they are indifferent to the outcome?
\end{quote}

\noindent We consider binary (yes/no) adjudication tasks and the following simple mechanism. Each agent submits a vote with her opinion and the adjudication outcome is decided using majority. Agents are rewarded for voting in accordance with the final verdict and less so for voting otherwise. This approach has been deployed in real systems like Kleros~\cite{kleros_old,kleros}. Kleros is already deployed on Ethereum and, at the time of writing, it has allegedly settled more than one thousand disputes.

\paragraph{Our contributions.}
Our main conceptual contribution is a new model for the behaviour of strategic agents. The model aims to capture the two important components of strategic behaviour while participating in an adjudication task. The first one is to decide the effort the agent needs to exert to get sufficient understanding of the task and form her opinion. The second one is whether she will cast this opinion as vote or she will vote for the opposite alternative. We assume that, when dealing with an adjudication task, agents do not communicate with each other. Instead, each of them has access to the outcome of similar tasks from the past. An agent can compare these outcomes to her own reasoning for them, which allows her to conclude whether her background knowledge is positively correlated, negatively correlated, or uncorrelated to the votes cast by the other agents. Payments can be used to amplify the agent's incentive to take such correlation into account. A strategic agent then acts as follows. If there is positive correlation, her opinion for the new adjudication task will be cast as vote. If correlation is negative, she will cast the opposite vote. If there is no correlation, the agent will vote randomly.

We assume that each adjudication task has a ground truth alternative that we wish to recover. Agents are distinguished into well-informed and misinformed ones. Well-informed (respectively, misinformed) agents are those whose opinions get closer to (respectively, further away from) the ground truth with increased effort. The ground truth is unobservable and, thus, the agents are not aware of the category to which they belong.

After presenting the strategic agent model, we characterize the strategies of the agents at equilibria of the induced game. We use this characterization to identify a sufficient condition for payments so that equilibria are simple, in the sense that the agents either vote randomly or they are all biased towards the same alternative. Next, we focus on a simple scenario with a population of well-informed and misinformed agents with complementary effort functions and show how to efficiently find payments that result in adjudication that recovers the ground truth with a given probability. Finally, we conduct experiments to justify that strategic play of a population with a majority of well-informed agents results in correct adjudication when payments are set appropriately.

\paragraph{Related work.}
Voting, the main tool we use for adjudication, has received enormous attention in the social choice theory literature ---originating with the seminal work of~\citeauthor{A51}~\shortcite{A51}--- and its recent computational treatment~\cite{comsoc}. However, the main assumption there is that agents have preferences about the alternatives and thus an interest for the voting outcome, in contrast to our case where agents' interest for the final outcome depends only on whether this gives them compensation or not. Strategic voter behaviour is well-known to alter the intended outcome of all voting rules besides two-alternative majority voting and dictatorships~\cite{G73,S75}. Positive results are possible with the less popular approach of introducing payments to the voting process; e.g., see \citeauthor{PW18}~\shortcite{PW18}.

The assumption for a ground truth alternative has been also inspired from voting theory~\cite{CPS16,CS05,Y88}. In a quite popular approach, votes are considered as noisy estimates of an underlying ground truth; typically, agents tend to inherit the preferences in the ground truth more often than the opposite ones. Our assumption for a majority of well-informed agents is in accordance with this. However, an important feature here is that the ground truth is unobservable. This is a typical assumption in the area of peer prediction mechanisms for unverifiable information (see~\citeauthor{FR17}~\shortcite{FR17}, Chapter 3), where a set of agents are used to decide about the quality of data. However, that line of work has a mechanism design flavour and assumes compensations to the agents so that their evaluation of the available data is truthful (e.g., see~ \citeauthor{Witkowski18}~\shortcite{Witkowski18}). This is significantly different than our modeling assumptions here. In particular, any evaluation of the quality of the agents---a task that is usually part of crowdsourcing systems; e.g., see~\citeauthor{SZP15}~\shortcite{SZP15}---is in our case infeasible. Still, our payment optimization is similar in spirit to automated mechanism design~\cite{S03} but, instead of aiming for truthful agent behaviour, we have a particular equilibrium as target.

\section{Modeling assumptions and notation}
We assume that adjudication tasks with two alternatives are outsourced to $n$ agents. We use the integers in $[n]=\{1, 2, ..., n\}$ to identify the agents. For an adjudication task, each agent casts a vote for one of the alternatives and the majority of votes defines the adjudication outcome. In the case of a tie, an outcome is sampled uniformly at random. To motivate voting, payments are used. A {\em payment function} $p:[0,1]\rightarrow \R$ indicates that agent $i$ gets a payment of $p(x)$ when the total fraction of agents casting the same vote as $i$ is $x$. Payments can be positive or negative (corresponding to monetary transfers to and from the agents, respectively).

The objective of an adjudication task is to recover the underlying \emph{ground truth}. We denote by $T$ the ground truth and by $F$ the other alternative. We use the terms $T$-vote and $F$-vote to refer to a vote for alternative $T$ and $F$, respectively. To decide which vote to cast, agents put an effort to understand the adjudication case and get a \emph{signal} of whether the correct adjudication outcome is $T$ or $F$. We partition the agents into two categories, depending on whether their background knowledge is sufficient so that the quality of the signal they receive increases with extra effort (\emph{well-informed} agents) or worsens (\emph{misinformed} agents). Each agent $i$ is associated with an \emph{effort function} $f_i:\R_{\geq 0}\rightarrow [0,1]$ which relates the quality of the signal received by an agent with the effort she exerts as follows: the signal agent $i$ gets when she exerts an effort $x\geq 0$ is for the ground truth alternative $T$ with probability $f_i(x)$ and for alternative $F$ with probability $1-f_i(x)$. We assume that effort functions are continuously differentiable and have $f_i(0)=1/2$. The effort function for a well-informed agent $i$ is strictly increasing and strictly concave. The effort function for a misinformed agent is strictly decreasing and strictly convex. The functions $f_i(x)=1-\frac{e^{-x}}{2}$ and $f_i(x)=\frac{e^{-x}}{2}$ are typical examples of effort functions for a well-informed and a misinformed agent, respectively.

Agents are rational. They are involved in a \emph{strategic game} where they aim to maximize their 
\emph{utility}, consisting only of the payment they receive minus the effort they exert. In particular, we assume the agents are entirely indifferent to the outcome. This may lead to voting differently than what their signal indicates. We denote by $(\lambda_i,\beta_i)$ the \emph{strategy} of agent $i$, where $\lambda_i$ is the effort put and $\beta_i$ is the probability of casting a vote that is identical to the signal received (and, thus, the agent casts a vote for the opposite alternative with probability $1-\beta_i$). The utility of an agent is \emph{quasilinear}, i.e., equal to the amount of payments received minus the effort exerted. We assume that agents are \emph{risk neutral} and thus aim to maximize the expectation of their utility. Denote by $m_i$ the random variable indicating the number of agents different than $i$ who cast a $T$-vote. Clearly,  $m_i$ depends on the strategies of all agents besides $i$ but, for simplicity, we have removed this dependency from our notation. Now, the expected utility of agent $i$ when using strategy $(\lambda_i,\beta_i)$ is
\begin{align}\nonumber
&\E[u_i(\lambda_i,\beta_i,m_i)]\\\nonumber
&= -\lambda_i+f_i(\lambda_i)\beta_i\cdot \E\left[p\left(\frac{1+m_i}{n}\right)\right]\\\nonumber
&\quad\,+f_i(\lambda_i)(1-\beta_i)\cdot \E\left[p\left(\frac{n-m_i}{n}\right)\right]\\\nonumber
& \quad\, +(1-f_i(\lambda_i))\beta_i\cdot \E\left[p\left(\frac{n-m_i}{n}\right)\right]\\\nonumber
& \quad\, +(1-f_i(\lambda_i))(1-\beta_i)\cdot \E\left[p\left(\frac{1+m_i}{n}\right)\right]\\\nonumber
&= -\lambda_i+\E\left[p\left(\frac{1+m_i}{n}\right)\right]\\\label{eq:utility}
&\quad\,+\left(\beta_i(2f_i(\lambda_i)-1)-f_i(\lambda_i)\right)\cdot Q(m_i).
\end{align}
The quantities $p\left(\frac{1+m_i}n\right)$ and $p\left(\frac{n-m_i}n\right)$ are the payments agent $i$ receives when she votes for alternatives $T$ and $F$, respectively. The four positive terms in the RHS of the first equality above are the expected payments for the four cases defined depending on the signal received and whether it is cast as a vote or not. In the second equality, we have used the abbreviation 
\begin{align*}
Q(m_i)&=\E\left[p\left(\frac{1+m_i}{n}\right)-p\left(\frac{n-m_i}{n}\right)\right],
\end{align*}
which we also use extensively in the following. Intuitively, given the strategies of the other agents, $Q(m_i)$ is the additional expected payment agent $i$ gets when casting a $T$-vote compared to an $F$-vote.

We say that a set of strategies, in which agent $i\in [n]$ uses strategy $(\lambda_i,\beta_i)$, is an \emph{equilibrium} in the strategic game induced, if no agent can increase her utility by unilaterally changing her strategy. In other words, the quantity $\E[u_i(x,y,m_i)]$ is maximized with respect to $x$ and $y$ by setting $x=\lambda_i$ and $y=\beta_i$ for $i\in [n]$.

\section{Equilibrium analysis}
We are now ready to characterize equilibria. We remark that the cases (a), (b), and (c) of \cref{lem:equilibrium} correspond to the informal terms no correlation, positive correlation, and negative correlation used in the introductory section.
\begin{lemma}[equilibrium conditions]\label{lem:equilibrium}
The strategy of agent $i$ at equilibrium is as follows:
\begin{itemize}
\item[(a)] If $|f'_i(0) \cdot Q(m_i)| \leq 1$, then $\lambda_i=0$ and $\beta_i$ can have any value in $[0,1]$.
\item[(b)] If $f'_i(0) \cdot Q(m_i)> 1$, then $\lambda_i$ is positive and such that $f'_i(\lambda_i)\cdot Q(m_i)=1$ and $\beta_i=1$.
\item[(c)] If $f'_i(0) \cdot Q(m_i)<-1$, then $\lambda_i$ is positive and such that $f'_i(\lambda_i)\cdot Q(m_i)=-1$ and $\beta_i=0$.
\end{itemize}
\end{lemma}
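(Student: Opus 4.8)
The plan is to treat this as a pure best-response computation. Since $m_i$ counts only the votes of agents other than $i$, both $Q(m_i)$ and $\E[p((1+m_i)/n)]$ are constants with respect to agent $i$'s own choice of $(\lambda_i,\beta_i)$. Writing $Q := Q(m_i)$ and discarding additive constants, maximizing the expected utility in \cref{eq:utility} amounts to maximizing $g(\lambda,\beta) = -\lambda + \bigl(\beta(2f_i(\lambda)-1) - f_i(\lambda)\bigr)\,Q$ over $\lambda \ge 0$ and $\beta \in [0,1]$. The first move I would make is a reparametrization: setting $\gamma = 2\beta - 1 \in [-1,1]$ and dropping the constant $-Q/2$ that drops out, the objective collapses to the clean form $\tilde g(\lambda,\gamma) = -\lambda + \gamma\,Q\,(f_i(\lambda) - \tfrac{1}{2})$.

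For fixed $\lambda$ this is linear in $\gamma$, so an optimal $\gamma$ sits at an endpoint: $\gamma = \operatorname{sign}\bigl(Q(f_i(\lambda)-\tfrac{1}{2})\bigr)$, yielding value $-\lambda + |Q|\cdot|f_i(\lambda) - \tfrac{1}{2}|$ whenever the coefficient is nonzero, and leaving $\gamma$ (hence $\beta$) entirely free precisely when $f_i(\lambda) = \tfrac{1}{2}$, i.e. when $\lambda = 0$. This reduces the two-variable problem to maximizing the single-variable function $\psi(\lambda) := -\lambda + |Q|\cdot|f_i(\lambda) - \tfrac{1}{2}|$ over $\lambda \ge 0$.

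The key structural fact is that $\psi$ is concave, and crucially this holds uniformly across both agent types. For a well-informed agent $f_i$ is increasing and concave, so $|f_i - \tfrac{1}{2}| = f_i - \tfrac{1}{2}$ is concave; for a misinformed agent $f_i$ is decreasing and convex, so $|f_i - \tfrac{1}{2}| = \tfrac{1}{2} - f_i$ is again concave. In both cases this term vanishes at $0$, is smooth for $\lambda>0$, and has right-derivative $|f_i'(0)|$ at the origin, so $\psi$ is concave with $\psi'(\lambda) = |Q|\,|f_i'(\lambda)| - 1$. Because $f_i$ has bounded range $[0,1]$ and is monotone with constant-sign curvature, $|f_i'(\lambda)| \to 0$, hence $\psi' \to -1$ and the maximum is attained. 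Concavity then delivers the dichotomy: if $\psi'(0^+) = |f_i'(0)\,Q| - 1 \le 0$, the maximizer is $\lambda = 0$ with $\beta$ unconstrained, which is case (a); otherwise there is a unique interior $\lambda_i > 0$ solving $|f_i'(\lambda_i)\,Q| = 1$.

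The remaining step is to read off $\beta$ and the sign in the interior case from $f_i'(0)\,Q$, and here I expect the main bookkeeping obstacle: one must check all four combinations of agent type and $\operatorname{sign}(Q)$ and confirm they collapse onto the two stated cases. Concretely, $\gamma = \operatorname{sign}\bigl(Q(f_i(\lambda_i)-\tfrac{1}{2})\bigr)$ combined with $\operatorname{sign}(f_i')$ (positive for well-informed, negative for misinformed) fixes $\operatorname{sign}(f_i'(\lambda_i)\,Q)$; tracking these signs shows that $f_i'(0)\,Q > 1$ forces $\gamma = 1$, i.e. $\beta_i = 1$, together with $f_i'(\lambda_i)\,Q = 1$, which is case (b), while $f_i'(0)\,Q < -1$ forces $\gamma = -1$, i.e. $\beta_i = 0$, together with $f_i'(\lambda_i)\,Q = -1$, which is case (c). The point worth emphasizing in the write-up is that the stated conditions are phrased purely in terms of $f_i'(0)\,Q$ and therefore do not mention the agent's type at all, even though the intermediate concavity argument genuinely distinguishes the two types.
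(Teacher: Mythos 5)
Your proposal is correct and follows essentially the same route as the paper's proof: exploit linearity of the utility in $\beta_i$ to pin down (or free up) the vote-flipping probability, then solve the one-dimensional first-order condition in $\lambda_i$, using the strict concavity/convexity and vanishing derivative of $f_i$ to locate the maximizer. Your reparametrization to $\gamma=2\beta-1$ and the unified concave objective $\psi(\lambda)=-\lambda+|Q|\,|f_i(\lambda)-\tfrac12|$ is a slightly cleaner packaging that handles both agent types at once and makes uniqueness of the interior critical point explicit, but it is not a different argument.
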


\begin{proof}
First, observe that when agent $i$ selects $\lambda_i=0$, her expected utility is 
\begin{align*}
\E(u_i(0,\beta_i,m_i)]=\E\left[p\left(\frac{1+m_i}{n}\right)\right]-\frac{1}{2}Q(m_i),
\end{align*} i.e., it is independent of $\beta_i$. So, $\beta_i$ can take any value in $[0,1]$ when $\lambda_i=0$.

In case (b), we have $f'_i(0)\cdot Q(m_i)>0$ which, by the definition of the effort function $f_i$, implies that $(2f_i(\lambda_i)-1)\cdot Q(m_i)>0$ for $\lambda_i>0$. By inspecting the dependence of expected utility on $\beta_i$ at the RHS of equation (\ref{eq:utility}), we get that if agent $i$ selects $\lambda_i>0$, she must also select $\beta_i=1$ to maximize her expected utility in this case. Similarly, in case (c), we have $f'_i(0)\cdot Q(m_i)<0$ which implies that $(2f_i(\lambda_i)-1)\cdot Q(m_i)<0$ for $\lambda_i>0$. In this case, if agent $i$ selects $\lambda_i>0$, she will also select $\beta_i=0$ to maximize her expected utility. 

So, in the following, it suffices to reason only about the value of $\lambda_i$. Let 
\begin{align}\label{eq:deriv}
\nonumber\Delta_i(\lambda_i)&=\frac{\partial \E[u_i(\lambda_i,\beta_i,m_i)]}{\partial \lambda_i}
\\&= -1+(2\beta_i-1)f'(\lambda_i)\cdot Q(m_i)
\end{align}
denote the derivative of the expected utility of agent $i$ with respect to $\lambda_i$. In case (a), by the strict concavity/convexity of the effort function $f_i$ we have $|f'_i(\lambda_i) \cdot Q(m_i)| < 1$ for $\lambda_i>0$ and
\begin{align*}
    \Delta_i(\lambda_i) &= -1+(2\beta_i-1)f'_i(\lambda_i)\cdot Q(m_i)\\&\leq -1+|2\beta_i-1|\cdot |f'_i(\lambda_i)\cdot Q(m_i)| < 0.
\end{align*}
Hence, the expected utility of agent $i$ strictly decreases with $\lambda_i>0$ and the best strategy for agent $i$ is to set $\lambda_i=0$.

Otherwise, in cases (b) and (c), the derivative $\Delta_i(\lambda_i)$ has strictly positive values for $\lambda_i$ arbitrarily close to $0$ (this follows by the facts that $f$ is strictly convex/concave and continuously differentiable), while it is clearly negative as $\lambda_i$ approaches infinity (where the derivative of $f$ approaches $0$). Hence, the value of $\lambda_i$ selected by agent $i$ at equilibrium is one that nullifies the RHS of (\ref{eq:deriv}), i.e., such that $f'_i(\lambda_i)\cdot Q(m_i)=1$ in case (b) and $f'_i(\lambda_i)\cdot Q(m_i)=-1$ in case (c). Recall that $\beta_i$ is equal to $1$ and $0$ in these two cases, respectively.
\end{proof}

Using \cref{lem:equilibrium}, we can now identify some properties about the structure of equilibria.

\begin{lemma}\label{lemma:no_effort}
For any payment function, no effort by all agents (i.e., $\lambda_i=0$ for $i\in [n]$) is an equilibrium.
\end{lemma}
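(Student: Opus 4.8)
The plan is to verify that the all-zero-effort profile satisfies the best-response condition of \cref{lem:equilibrium} for every agent. Fix an agent $i$ and suppose every other agent $j \neq i$ plays $\lambda_j = 0$. When $\lambda_j = 0$, agent $j$ receives a signal for $T$ with probability $f_j(0) = 1/2$, so regardless of $\beta_j$ she casts a $T$-vote with probability exactly $1/2$; moreover the votes of distinct agents are independent. Hence the random variable $m_i$ counting the $T$-votes among the other $n-1$ agents follows a $\mathrm{Binomial}(n-1, 1/2)$ distribution.

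The key step is to show that this forces $Q(m_i) = 0$. I would exploit the symmetry of the symmetric binomial: $m_i$ and $(n-1) - m_i$ are identically distributed. Writing $Q(m_i) = \E[p(\frac{1+m_i}{n})] - \E[p(\frac{n-m_i}{n})]$ and using $n - m_i = 1 + \left((n-1) - m_i\right)$, the substitution $m_i \mapsto (n-1) - m_i$, which preserves the distribution, turns the second expectation into $\E[p(\frac{1 + m_i}{n})]$, matching the first. Thus the two terms cancel and $Q(m_i) = 0$.

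With $Q(m_i) = 0$ in hand, we have $f'_i(0) \cdot Q(m_i) = 0$, so $|f'_i(0) \cdot Q(m_i)| \le 1$ and case (a) of \cref{lem:equilibrium} applies: the best response of agent $i$ is $\lambda_i = 0$ (with $\beta_i$ arbitrary). Since $i$ was arbitrary, no agent can profit by deviating from zero effort, so the all-zero-effort profile is an equilibrium for every payment function.

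I expect the only real content to be the symmetry computation establishing $Q(m_i) = 0$; everything else is a direct appeal to \cref{lem:equilibrium}. The one subtlety worth stating carefully is that both the independence of the other votes and the exact $1/2$ probability of a $T$-vote rely on $f_j(0) = 1/2$, which in turn makes each $\beta_j$ irrelevant, so the conclusion genuinely holds for all choices of the minority-voting parameters of the other agents.
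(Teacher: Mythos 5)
Your proposal is correct and follows the same route as the paper's proof: establish that under zero effort each other agent votes $T$ with probability exactly $1/2$, use the resulting symmetry of the distribution of $m_i$ (i.e., $m_i$ and $n-1-m_i$ are identically distributed) to conclude $Q(m_i)=0$, and then invoke case (a) of \cref{lem:equilibrium}. The only difference is that you spell out the change-of-variable computation and the irrelevance of the $\beta_j$'s slightly more explicitly than the paper does.
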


\begin{proof}
Notice that, when no agent puts any effort, each vote selects one of the two alternatives equiprobably. Then, the probability that $m_i$ takes a value $t\in \{0, 1, ..., n-1\}$ is equal to the probability that it takes value $n-1-t$. Hence, $\E\left[p\left(\frac{1+m_i}{n}\right)\right] = \E\left[p\left(\frac{n-m_i}{n}\right)\right]$ and $Q(m_i)=0$. Hence, all agents' strategies satisfy the condition of case (a) of Lemma~\ref{lem:equilibrium} and, thus, $\lambda_i=0$ is the best-response for each agent $i.$
\end{proof}
We will use the term \emph{non-trivial} for equilibria having at least one agent putting some effort.

The next lemma reveals the challenge of adjudication in our strategic environment. It essentially states that for every equilibrium that yields probably correct adjudication, there is an equilibrium that yields probably incorrect adjudication with the same probability.
\begin{lemma}\label{lem:mirror}
For any payment function, if the set of strategies $(\lambda_i,\beta_i)_{i\in [n]}$ is an equilibrium, so is the set of strategies $(\lambda_i,1-\beta_i)_{i\in [n]}$.
\end{lemma}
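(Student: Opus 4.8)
The plan is to exploit the $T$/$F$ symmetry built into the payment function: $p$ depends only on the \emph{fraction} of agents who vote the same as agent $i$, never on which of the two alternatives they happen to agree upon. Swapping the roles of $T$ and $F$ should therefore carry equilibria to equilibria, and I claim the map $(\lambda_i,\beta_i)\mapsto(\lambda_i,1-\beta_i)$ realizes exactly this swap. The argument reduces to tracking how this substitution affects the quantity $Q(m_i)$ that drives every case of \cref{lem:equilibrium}.

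First I would examine what $\beta_i\mapsto 1-\beta_i$ does to the vote each agent casts. Writing $a_i=f_i(\lambda_i)$, under the original profile agent $i$ casts a $T$-vote with probability $a_i\beta_i+(1-a_i)(1-\beta_i)$, whereas under the flipped profile this becomes $a_i(1-\beta_i)+(1-a_i)\beta_i$; these two expressions sum to $1$. Hence flipping all $\beta_j$ replaces each agent's $T$-vote probability by her $F$-vote probability. Consequently, if $m_i$ denotes the number of $T$-votes among agents other than $i$ in the original profile, then the corresponding count $m_i'$ in the flipped profile is distributed exactly as $(n-1)-m_i$.

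Next I would compute how $Q$ transforms. Substituting the distributional identity $m_i'\sim(n-1)-m_i$ into the definition of $Q$ turns the arguments $\frac{1+m_i'}{n}$ and $\frac{n-m_i'}{n}$ into $\frac{n-m_i}{n}$ and $\frac{1+m_i}{n}$ respectively, which merely swaps the two terms and yields $Q(m_i')=-Q(m_i)$. I would then verify case by case that $(\lambda_i,1-\beta_i)$ is precisely the best response prescribed by \cref{lem:equilibrium} for this negated quantity. In case (a), $|f_i'(0)Q(m_i')|=|f_i'(0)Q(m_i)|\le 1$, so $\lambda_i=0$ is unchanged and any $\beta$, in particular $1-\beta_i$, is allowed. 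A profile originally in case (b), where $f_i'(0)Q(m_i)>1$ forces $\beta_i=1$ and $f_i'(\lambda_i)Q(m_i)=1$, lands in case (c): there $f_i'(0)Q(m_i')<-1$ requires $\beta=0=1-\beta_i$, and the defining equation $f_i'(\lambda_i)Q(m_i')=-1$ is identical to $f_i'(\lambda_i)Q(m_i)=1$, so the same $\lambda_i$ works; case (c) maps to case (b) symmetrically. As this holds for every $i$, the flipped profile is an equilibrium.

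I expect the main obstacle to be bookkeeping rather than conceptual. The delicate point is that $Q$ must be recomputed with respect to the \emph{flipped} profile, since $m_i$ is determined by the other agents' strategies; one must check that the identity $Q(m_i')=-Q(m_i)$ genuinely combines both the symmetry of $p$ and the distributional identity $m_i'\sim(n-1)-m_i$ from the second step. Getting the indices and the swap of the two payment terms exactly right is where care is needed.
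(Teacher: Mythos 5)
Your proof is correct and follows essentially the same route as the paper's: both establish the distributional identity $m_i'\sim (n-1)-m_i$ under the flip, deduce $Q(m_i')=-Q(m_i)$, and then invoke the equilibrium characterization of \cref{lem:equilibrium} to see that each flipped strategy remains a best response. Your write-up is if anything slightly more explicit than the paper's (which compresses the final case-by-case check into one sentence), but there is no substantive difference.
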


\begin{proof}
With a slight abuse of notation, we reserve the notation $m_i$ for the initial equilibrium where agent $i$ follows strategy $(\lambda_i,\beta_i)_{i\in [n]}$ and denote by $m'_i$ the random variable indicating the number of agents different than $i$ who cast a $T$-vote in the state where agent $i$ follows strategy $(\lambda_i,1-\beta_i)_{i\in [n]}$. Notice that, due to symmetry, the probability that $m_i$ gets a given value $t$ is equal to the probability that $m'_i$ gets the value $n-1-t$. Hence,
\begin{align*} &\E\left[p\left(\frac{1+m'_i}{n}\right)\right] = \E\left[p\left(\frac{n-m_i}{n}\right)\right] \\\mbox{ and  }
    &\E\left[p\left(\frac{n-m'_i}{n}\right)\right] = \E\left[p\left(\frac{1+m_i}{n}\right)\right].
\end{align*}
Thus, $Q(m_i)=-Q(m'_i)$, hence $f'_i(0)\cdot Q(m_i)=-f'_i(0)\cdot Q(m'_i)$. By \cref{lem:equilibrium}, we have that the strategies of all agents in the new state are consistent with the equilibrium conditions of \cref{lem:equilibrium}, provided that the initial state is an equilibrium (and thus satisfies the conditions).
\end{proof}

We say that an equilibrium is \emph{simple} if there exists an alternative $a\in \{T,F\}$ such that all agents cast a vote for alternative $a$ with probability at least $1/2$. Intuitively, this makes prediction of the agents' behaviour at equilibrium easy. Together with \cref{lem:equilibrium}, this definition implies that, in a simple equilibrium, an agent putting no effort (i.e., $\lambda_i=0$) can use any strategy $\beta_i$. For agents putting some effort, a well-informed agent uses $\beta_i=1$ if $a=T$ and $\beta_i=0$ if $a=F$ and a misinformed agent uses $\beta_i=0$ if $a=T$ and $\beta_i=1$ if $a=T$.

\begin{lemma}[simple equilibrium condition]\label{lem:simple}
When the payment function $p$ satisfies
\begin{align}\nonumber
&p\left(\frac{2+m}{n}\right)-p\left(\frac{1+m}{n}\right)\\\label{eq:lem-simple}
&\quad\, +p\left(\frac{n-m}{n}\right)-p\left(\frac{n-m-1}{n}\right) \geq 0,
\end{align}
for every $m\in \{0, 1, ..., n-2\}$, all equilibria are simple.
\end{lemma}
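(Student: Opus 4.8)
The plan is to reformulate the hypothesis as a monotonicity statement and then compare the quantities $Q(m_i)$ across agents via a coupling. First I would introduce the function $\phi(t)=p\left(\frac{1+t}{n}\right)-p\left(\frac{n-t}{n}\right)$ for $t\in\{0,\dots,n-1\}$, so that $Q(m_i)=\E[\phi(m_i)]$ directly from its definition. A one-line computation shows that the left-hand side of~\eqref{eq:lem-simple} with parameter $m$ equals $\phi(m+1)-\phi(m)$; hence the hypothesis is exactly the statement that $\phi$ is non-decreasing on $\{0,\dots,n-1\}$.

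Next I would use \cref{lem:equilibrium} to translate simplicity into a condition on the signs of the $Q(m_i)$. An agent with $\lambda_i=0$ (in particular every agent with $Q(m_i)=0$, which falls in case (a) since then $|f'_i(0)\cdot Q(m_i)|=0\le 1$) votes for each alternative with probability exactly $1/2$, because $f_i(0)=1/2$. For an agent exerting positive effort we must be in case (b) or (c), so $Q(m_i)\neq 0$; a short case check over well-informed/misinformed combined with cases (b)/(c) shows that her probability $b_i$ of casting a $T$-vote satisfies $b_i>1/2$ when $Q(m_i)>0$ and $b_i<1/2$ when $Q(m_i)<0$. Consequently, the equilibrium is simple as soon as all agents exerting positive effort have $Q(m_i)$ of the same sign (take $a=T$ if this sign is positive and $a=F$ otherwise; the zero-effort agents are consistent with either choice).

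The core step is a coupling that compares $Q(m_i)$ and $Q(m_j)$. Fix two agents $i\neq j$ and let $S$ be the number of $T$-votes among the remaining $n-2$ agents, so that $m_i=S+\mathbbm{1}[j\text{ votes }T]$ and $m_j=S+\mathbbm{1}[i\text{ votes }T]$ with the \emph{same} $S$. Since agents vote independently, conditioning on $S$ and averaging yields $Q(m_i)=A+b_j D$ and $Q(m_j)=A+b_i D$, where $A=\E_S[\phi(S)]$ and $D=\E_S[\phi(S+1)-\phi(S)]$. The monotonicity of $\phi$ from the first step makes every summand of $D$ non-negative, so $D\ge 0$; this is the only place the hypothesis enters.

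Finally I would conclude by contradiction. If the equilibrium were not simple, then by the second step there would be an agent $i$ with positive effort and $Q(m_i)>0$ (so $b_i>1/2$) and an agent $j$ with positive effort and $Q(m_j)<0$ (so $b_j<1/2$). But the coupling gives $Q(m_i)-Q(m_j)=(b_j-b_i)D\le 0$, since $b_j<b_i$ and $D\ge 0$, contradicting $Q(m_i)>0>Q(m_j)$. I expect the main obstacle to be spotting the reformulation through $\phi$ and arranging the coupling so that $Q(m_i)$ and $Q(m_j)$ share the common terms $A$ and $D$; once $D\ge 0$ is recognized as the exact content of~\eqref{eq:lem-simple}, the sign comparison is immediate.
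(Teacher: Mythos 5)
Your proof is correct and takes essentially the same route as the paper's: the paper likewise reduces simplicity to the sign of $Q(m_i)$ via the case analysis of \cref{lem:equilibrium}, and its identity $Q(m_{3-i})=Q(m)+\delta_i\left(Q(m+1)-Q(m)\right)$ is exactly your coupling $Q(m_i)=A+b_jD$ with $D\geq 0$ being the content of the hypothesis. Your reformulation through the function $\phi$ is only a notational repackaging of the same argument.
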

\begin{proof}
For the sake of contradiction, let us assume that the payment function $p$ satisfies the condition of the lemma but, at some equilibrium, agents $1$ and $2$ cast a $T$-vote with probability higher than $1/2$ and lower than $1/2$, respectively. Clearly, the equilibrium strategies of agents $1$ and $2$ cannot belong to case (a) of \cref{lem:equilibrium} as the probability of casting a $T$-vote would be exactly $1/2$ in that case.

We first focus on agent $1$ and distinguish between two cases. If her strategy is $\beta_1=1$, then it belongs to case (b) of \cref{lem:equilibrium} and, thus, $f'_1(0)\cdot Q(m_{1})>1$. Furthermore, the probability of casting a $T$-vote is $f_1(\lambda)$. Hence, $f_1(\lambda)>1/2$, implying that agent $1$ is well-informed with $f'_1(0)>0$. By the inequality above, we conclude that $Q(m_{1})>0$. If instead, agent $1$'s strategy is $\beta_1=0$, then it belongs to case (c) of \cref{lem:equilibrium} and, thus, $f'_1(0)\cdot Q(m_{1})<1$. The probability of casting a $T$-vote is now $1-f_1(\lambda)$. Hence, $f_1(\lambda)<1/2$, implying that agent $1$ is misinformed with $f'_1(0)<0$. By the inequality involving $Q(m_{1})$, we conclude that $Q(m_{1})>0$ again.

Applying the same reasoning for agent 2, we can show that $Q(m_{2})<0$. Hence,
\begin{align}\label{eq:pos-Q}
    Q(m_{1})-Q(m_{2})>0.
\end{align}

Denote by $X_1$ and $X_2$ the random variables indicating that agents $1$ and $2$ cast a $T$-vote and by $m$ the number of $T$-votes by agents different than $1$ and $2$. Let $\delta_i = \Pr[X_i=1]$. For $i\in \{1,2\}$, we have \begin{align}\nonumber
    Q(m_{3-i})&= \E\left[p\left(\frac{1+m+X_i}{n}\right)-p\left(\frac{n-m-X_i}{n}\right)\right]\\\nonumber
    &= \delta_i\cdot \E\left[p\left(\frac{2+m}{n}\right)-p\left(\frac{n-m-1}{n}\right)\right]\\\nonumber
    &\quad\, +(1-\delta_i)\cdot \E\left[p\left(\frac{1+m}{n}\right)-p\left(\frac{n-m}{n}\right)\right]\\\label{eq:Q-X_i}
    &= Q(m)+\delta_i\left(Q(m+1)-Q(m)\right).
\end{align}
Hence, from (\ref{eq:pos-Q}) and (\ref{eq:Q-X_i}) we obtain that 
\begin{align}\label{eq:dQ}
    \left(Q(m+1)-Q(m)\right)\cdot (\delta_2-\delta_1) &>0.
\end{align}
Notice that the assumption on $p$ implies that 
\begin{align*}
    &Q(m+1)-Q(m)\\ &= \E\left[p\left(\frac{2+m}{n}\right)-p\left(\frac{n-m-1}{n}\right)\right]\\\nonumber&\quad\,-\E\left[p\left(\frac{1+m}{n}\right)-p\left(\frac{n-m}{n}\right)\right]\geq 0,
\end{align*}
while our assumption on the probability of casting a $T$-vote implies $\delta_1>1/2>\delta_2$. These last two inequalities contradict (\ref{eq:dQ}) and the proof is complete.
\end{proof}
It can be verified that the payment function 
\begin{align*}
    p(x) &= \begin{cases} \frac{\omega}{xn}, & x\geq1/2\\ -\frac{\ell}{xn}, & x<1/2 \end{cases}
\end{align*}
with $\omega\leq \ell$ satisfies the condition of \cref{lem:simple}. We refer to this function as the award/loss sharing payment function. Essentially, the agents with the majority vote share an award of $\omega$ while the ones in minority share a loss of $\ell$. Note that for $\omega=\ell$, the payment function is strictly budget balanced unless all votes are unanimous. This is similar to the payment function used in Kleros. A sufficient condition for simple equilibria which is quite broad but does not include Kleros' payments is the following.

\begin{cor}\label{cor:simple}
When the payment functions are monotone non-decreasing, all equilibria are simple.
\end{cor}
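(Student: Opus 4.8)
The plan is to show that monotone non-decreasing payment functions satisfy the hypothesis of \cref{lem:simple}, so that the corollary follows immediately. Recall that \cref{lem:simple} requires, for every $m\in\{0,1,\dots,n-2\}$, the inequality
\begin{align*}
p\left(\tfrac{2+m}{n}\right)-p\left(\tfrac{1+m}{n}\right)+p\left(\tfrac{n-m}{n}\right)-p\left(\tfrac{n-m-1}{n}\right)\geq 0.
\end{align*}
I would verify this by observing that it is a sum of two differences, each of which is nonnegative individually when $p$ is monotone non-decreasing.

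The key step is to check that the arguments inside each difference are ordered correctly. For the first difference, the arguments are $\frac{2+m}{n}$ and $\frac{1+m}{n}$; since $\frac{2+m}{n}>\frac{1+m}{n}$ and $p$ is non-decreasing, we get $p\left(\frac{2+m}{n}\right)-p\left(\frac{1+m}{n}\right)\geq 0$. For the second difference, the arguments are $\frac{n-m}{n}$ and $\frac{n-m-1}{n}$; since $\frac{n-m}{n}>\frac{n-m-1}{n}$ and $p$ is non-decreasing, we similarly obtain $p\left(\frac{n-m}{n}\right)-p\left(\frac{n-m-1}{n}\right)\geq 0$. Adding the two nonnegative quantities yields the desired inequality for every $m$ in the required range, so the condition of \cref{lem:simple} is met and all equilibria are simple.

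There is essentially no obstacle here: the corollary is a direct specialization of \cref{lem:simple}, and the only thing to confirm is that each pair of arguments is correctly ordered so that monotonicity can be applied termwise. The argument is robust to the range of $m$ --- for any $m\in\{0,1,\dots,n-2\}$ all four arguments lie in $[0,1]$, so $p$ is defined at each of them, and no boundary case spoils the ordering. Thus the proof is a one-line application of monotonicity to each of the two differences.
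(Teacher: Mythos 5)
Your proof is correct and is exactly the intended argument: the paper states the corollary without proof precisely because, as you show, monotonicity makes each of the two differences in condition (\ref{eq:lem-simple}) individually nonnegative, so \cref{lem:simple} applies directly.
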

 
\section{Selecting payments for correct adjudication}\label{sec:lp}
We now focus on the very simple scenario in which some of the $n$ agents are well-informed and have the same effort function $f$ and the rest are misinformed and have the effort function $1-f$. Can we motivate an expected $x$-fraction of them vote for the ground truth? 

Of course, we are interested in values of $x$ that are higher than $1/2$. This goal is directly related to asking for a high probability of correct adjudication. Indeed, as the agents cast their votes independently, the realized number of $T$-votes is sharply concentrated around their expectation and thus the probability of incorrect adjudication is exponentially small in terms of the number of agents $n$ and the quantity $(x-1/2)^2$. This can be proved formally by a simple application of well-known concentration bounds, e.g., Hoeffding's inequality~\cite{H63}.

So, our aim here is to define appropriate payment functions so that a set of strategies leading to an expected $x$-fraction of $T$-votes is an equilibrium. We will restrict our attention to payments satisfying the condition of \cref{lem:simple}; then, we know that all equilibria are simple. We will furthermore show that all equilibria are {\em symmetric}, in the sense that all agents cast a $T$-vote with the same probability. This means that there are $\lambda>0$ and $\beta\in \{0,1\}$ so that all well-informed agents use strategy $(\lambda,\beta)$ and all misinformed agents use the strategy $(\lambda,1-\beta)$.

\begin{lemma}\label{lem:symmetric}
Consider the scenario with $n$ agents, among which the well-informed agents use the same effort function $f$ and the misinformed agents use the effort function $1-f$. If the payment function $p$ satisfies the condition of \cref{lem:simple}, then all equilibria are symmetric.
\end{lemma}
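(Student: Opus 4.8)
The plan is to leverage the structure already in place. Since $p$ satisfies the condition of \cref{lem:simple}, every equilibrium is simple, so there is an alternative $a\in\{T,F\}$ that all agents favour with probability at least $1/2$. By \cref{lem:mirror} the map $\beta_i\mapsto 1-\beta_i$ sends equilibria to equilibria while leaving all the efforts $\lambda_i$ untouched, so it preserves the property of being symmetric; hence I would assume without loss of generality that $a=T$ and recover the $a=F$ case at the very end by this mirror reduction. For $a=T$ I would first read off each agent's strategy from \cref{lem:equilibrium}: a well-informed agent exerting effort $\lambda_i>0$ must (by simplicity) use $\beta_i=1$, so she falls in case (b) and satisfies $f'(\lambda_i)\cdot Q(m_i)=1$ with $Q(m_i)>0$; her probability of a $T$-vote is then $\delta_i:=f(\lambda_i)>1/2$. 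An idle agent with $\lambda_i=0$ casts a $T$-vote with probability exactly $1/2$, i.e.\ $\delta_i=1/2$.

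The key tool is the identity derived inside the proof of \cref{lem:simple}. Applying it to an arbitrary pair of agents $i,j$, and letting $m$ count the $T$-votes among the remaining $n-2$ agents, I obtain $Q(m_i)-Q(m_j)=(\delta_j-\delta_i)\bigl(Q(m+1)-Q(m)\bigr)$, while the condition \eqref{eq:lem-simple} yields $Q(m+1)-Q(m)\ge 0$. This single relation drives both halves of the argument, since it converts the ordering of the $\delta_i$ (hence of the efforts) into the ordering of the $Q(m_i)$, which the first-order conditions of \cref{lem:equilibrium} pin down.

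First I would rule out idle agents in a non-trivial equilibrium. Suppose agent $i$ exerts effort while agent $j$ is idle; then $\delta_j-\delta_i=\tfrac12-f(\lambda_i)<0$, so the identity forces $Q(m_j)\ge Q(m_i)$. But case (b) for agent $i$ gives $Q(m_i)>1/f'(0)>0$, whence $f'(0)\cdot Q(m_j)>1$, contradicting the case (a) condition $|f'(0)\cdot Q(m_j)|\le 1$ that an idle agent must satisfy. Hence in a non-trivial equilibrium every agent exerts positive effort and uses $\beta_i=1$.

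It then remains to equate the efforts. For two active agents with, say, $\lambda_i>\lambda_j$, strict concavity of $f$ gives $f'(\lambda_i)<f'(\lambda_j)$, so the first-order conditions $f'(\lambda_i)\cdot Q(m_i)=f'(\lambda_j)\cdot Q(m_j)=1$ with both $Q$'s positive yield $Q(m_i)>Q(m_j)$; meanwhile strict monotonicity of $f$ gives $\delta_i>\delta_j$, so the same identity gives $Q(m_i)-Q(m_j)=(\delta_j-\delta_i)\bigl(Q(m+1)-Q(m)\bigr)\le 0$, a contradiction. Thus all efforts coincide, all agents use $\beta_i=1$, and the equilibrium is symmetric. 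The main obstacle I anticipate is purely one of sign bookkeeping: everything hinges on coupling the monotonicity of $f$ and $f'$ with the inequality $Q(m+1)-Q(m)\ge 0$ supplied by \cref{lem:simple}, and on applying the pairwise identity with the random variable $m$ matched to the correct pair of agents.
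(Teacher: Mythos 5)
Your proposal is correct and follows essentially the same route as the paper: both arguments hinge on the pairwise identity $Q(m_i)-Q(m_j)=(\delta_j-\delta_i)\bigl(Q(m+1)-Q(m)\bigr)$ extracted from the proof of \cref{lem:simple}, combined with the first-order conditions of \cref{lem:equilibrium} and the strict concavity of $f$ to reach a contradiction. The only cosmetic difference is that you normalize to $a=T$ via \cref{lem:mirror} up front and split the contradiction into two steps (no idle agents, then equal efforts), whereas the paper runs a single argument on a pair with $\lambda_i>\lambda_j$.
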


\begin{proof}
For the sake of contradiction, assume that non-symmetric equilibra exist. Then, by Lemma~\ref{lem:mirror}, there exists an equilibrium, in which the agent $i$ putting the highest effort $\lambda_i>0$ is either well-informed and follows the strategy $(\lambda_i,1)$ or misinformed and follows the strategy $(\lambda_i,0)$, casting a $T$-vote with probability $f(\lambda_i)>1/2$. Let $j$ be another agent using strategy $(\lambda_j,\beta_j)$ with $\lambda_j<\lambda_i$. Since agent $i$ casts a $T$-vote with probability higher than $1/2$, agent $j$ is either well-informed and uses $\beta_j=1$ or misinformed and uses $\beta_j=0$; in any other case, she would cast a $T$-vote with probability less then $1/2$, contradicting the simplicity of equilibria from Lemma~\ref{lem:simple}. In both cases, the probability of casting a $T$-vote is
\begin{align}\label{eq:ineq-on-lambdas}
    f(\lambda_j)<f(\lambda_i).
\end{align}
Now, denote by $m$ the random variable indicating the number of agents different than $i$ and $j$ who case a $T$-vote. Then, it is $m_i=m+1$ with probability $f(\lambda_j)$ and $m_i=m$ with probability $1-f(\lambda_j)$. Thus, by the definition of $Q$, we get
\begin{align}\nonumber
    Q(m_i) &= \E\left[p\left(\frac{2+m}{n}\right)\right]\cdot f(\lambda_j)\\\nonumber
    & \quad + \E\left[p\left(\frac{1+m}{n}\right)\right]\cdot (1-f(\lambda_j))\\\nonumber
    & \quad - \E\left[p\left(\frac{n-m-1}{n}\right)\right]\cdot f(\lambda_j)\\\nonumber
    & \quad - \E\left[p\left(\frac{n-m}{n}\right)\right]\cdot (1-f(\lambda_j))\\\nonumber
    &= Q(m)+f(\lambda_j)\cdot 
\E\left[p\left(\frac{2+m}{n}\right)-p\left(\frac{1+m}{n}\right)\right.\\\label{eq:expression-for-Q}
&\quad\, \left.+p\left(\frac{n-m}{n}\right)-p\left(\frac{n-m-1}{n}\right)\right],
\end{align}
and an analogous equality for $Q(m_j)$. Since, by Lemma~\ref{lem:simple}, the expectation is non-negative, (\ref{eq:ineq-on-lambdas}) implies that 
\begin{align}\label{eq:ineq-on-Qs}
    Q(m_i)&\leq Q(m_j)
\end{align}
Now, by the equilibrium condition for agent $i$, we have $f'(\lambda_i)\cdot Q(m_i)=1$ (notice that this condition holds, no matter whether agent $i$ is well-informed or misinformed) and, hence,
\begin{align}\label{eq:positive-Q}
Q(m_i)&>0.
\end{align}
By the struct concavity of the effort function $f$ and since $\lambda_j<\lambda_i$, we also have that
\begin{align}\label{eq:ineq-on-derivs}
f'(\lambda_i)&<f'(\lambda_j).
\end{align}
Using the equilibrium condition for agent $j$ (again, this holds no matter whether agent $j$ is well-informed or misinformed) and inequalities (\ref{eq:ineq-on-Qs}), (\ref{eq:positive-Q}), and (\ref{eq:ineq-on-derivs}), we obtain 
\begin{align*}
    f'(\lambda_i)\cdot Q(m_i) &<f'(\lambda_j)\cdot Q(m_j)\leq f'(\lambda_j)\cdot Q(m_j)=1,
\end{align*}
which contradicts the equilibrium condition for agent $i$.
\end{proof}

 \cref{lem:symmetric} implies that, for $x>1/2$, an equilibrium with an expected $x$-fraction of $T$-votes has each agent casting a $T$-vote with probability $f(\lambda)=x$; the well-informed agents use the strategy $(\lambda,1)$ and the misinformed agents use the strategy $(\lambda,0)$. As agents vote independently, the random variables $m_i$ follow the same binomial distribution $\text{Bin}(n-1,x)$ with $n-1$ trials, each having success probability $x$. Also, notice that the fact that the effort function is strictly monotone implies that $\lambda$ is uniquely defined from $x$ as $\lambda=f^{-1}(x)$.

We now aim to solve the optimization task of selecting a payment function $p$ which satisfies the conditions of Lemma~\ref{lem:simple}, induces as equilibrium the strategy $(\lambda,1)$ for well-informed agents and the strategy $(\lambda,0)$ for misinformed agents, ensures non-negative expected utility for all agents (individual rationality), and minimizes the expected amount given to the agents as payment. As all agents cast a $T$-vote with the same probability and the quantities $m_i$ are identically distributed for different $i$s, it suffices to minimize the expected payment
\begin{align}\label{eq:obj}
    x\cdot \E\left[p\left(\frac{1+m_i}n\right)\right]+(1-x)\cdot \E\left[p\left(\frac{n-m_i}n\right)\right]
\end{align}
of a single agent. By the definition of expected utility in equation (\ref{eq:utility}), restricting this quantity to values at least as high as $f^{-1}(x)$ gives the individual rationality constraints for all agents. Furthermore, by Lemma~\ref{lem:equilibrium}, the equation,
\begin{align}
    \label{eq:minimal_eq_condition}
    f'(f^{-1}(x))\cdot Q(m_i)&=1,
\end{align}
gives the equilibrium condition for both well-informed and misinformed agents.

We can solve the optimization task above using linear programming. Our LP has the payment parameters $p(1/n)$, $p(2/n)$, ..., $p(1)$ as variables. The linear inequalities (\ref{eq:lem-simple}) for $m\in \{0, 1, ..., n-2\}$ form the first set of constraints, restricting the search to payment functions satisfying the conditions of Lemma~\ref{lem:simple}. Crucially, observe that the quantities $\E\left[p\left(\frac{1+m_i}n\right)\right]$ and $\E\left[p\left(\frac{n-m_i}n\right)\right]$ and, subsequently, $Q(m_i)$, can be expressed as linear functions of the payment parameters. Indeed, for $t=0, 1, ..., n-1$, let $z(t)=\Pr[m_i=t]$ be the known probabilities of the binomial distribution $\text{Bin}(n-1,x)$. Clearly,
\begin{align*}
\E\left[p\left(\frac{1+m_i}n\right)\right]&=\sum_{t=0}^{n-1}{z(t)\cdot p\left(\frac{1+t}n\right)},
\end{align*}
and,
\begin{align*}
\E\left[p\left(\frac{n-m_i}n\right)\right]&=\sum_{t=0}^{n-1}{z(t)\cdot p\left(\frac{n-t}n\right)}.
\end{align*}
Thus, the objective function (\ref{eq:obj}), the individual rationality constraint, and the equilibrium condition constraint can be expressed as linear functions of the LP variables. Overall, the LP has $n$ variables and $n+1$ constraints ($n$ inequalities and one equality).
The next statement summarizes the above discussion.
\begin{theorem}\label{thm:lp}
Consider the scenario with $n$ agents, among which the well-informed ones have the same effort function $f$ and the misinformed ones have the same effort function $1-f$. Given $x\in (1/2,1)$, selecting the payment function that satisfies the conditions of \cref{lem:simple}, induces an equilibrium in which all agents have non-negative expected utility and an expected $x$-fraction of agents casts a $T$-vote so that the expected amount given to the agents as payment is minimized, can be done in time polynomial in $n$ using linear programming.
\end{theorem}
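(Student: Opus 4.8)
The plan is to show that the entire search collapses to a linear program over the $n$ payment parameters $p(1/n), p(2/n), \dots, p(1)$, and then to invoke polynomial-time solvability of linear programming. First I would fix the structure of the target equilibrium. By \cref{lem:symmetric}, any non-trivial equilibrium under a payment function satisfying \cref{lem:simple} is symmetric, so every agent plays $(\lambda,1)$ for a common $\lambda>0$; demanding an expected $x$-fraction of $T$-votes pins down $f(\lambda)=x$, i.e., $\lambda=f^{-1}(x)$, which is well defined since $f$ is strictly increasing. Because agents vote independently, each $m_i$ is distributed as $\text{Bin}(n-1,x)$ with known point masses $z(t)=\Pr[m_i=t]$; in particular the $m_i$ are identically distributed, so there is a single common value, which I write $Q$ for $Q(m_i)$.

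Next I would express each of the three requirements as linear data in the variables $p(j/n)$. The conditions of \cref{lem:simple} are the $n-1$ inequalities (\ref{eq:lem-simple}) ranging over $m\in\{0,\dots,n-2\}$, each manifestly linear. For the objective and the equilibrium constraint, the key observation is that $\E[p((1+m_i)/n)]=\sum_{t=0}^{n-1} z(t)\,p((1+t)/n)$ and $\E[p((n-m_i)/n)]=\sum_{t=0}^{n-1} z(t)\,p((n-t)/n)$ are fixed linear combinations of the variables; hence $Q$ is linear, the payment objective (\ref{eq:obj}) is linear, and the equilibrium requirement $f'(f^{-1}(x))\cdot Q=1$ is a single linear equality, its coefficient $f'(f^{-1}(x))$ being a known positive constant since $f$ and $x$ are given.

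The step I would treat most carefully is arguing that this single equality, together with the \cref{lem:simple} inequalities, genuinely certifies that $(\lambda,1)$ is an equilibrium rather than merely a stationary point of the utility. Here I would appeal to \cref{lem:equilibrium}: since the equality forces $f'(f^{-1}(x))\cdot Q=1>0$ with $f'(f^{-1}(x))>0$ (agents are well-informed), we get $Q>0$; strict concavity of $f$ then gives $f'(0)>f'(f^{-1}(x))$, so $f'(0)\cdot Q>1$, placing each agent squarely in case (b) of \cref{lem:equilibrium} with best response exactly $(\lambda,1)$. Conversely, case (b) is precisely this equality, so the constraint is both necessary and sufficient for the target profile to be an equilibrium; this is the only genuinely nontrivial point, the rest being linearity bookkeeping.

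Finally, I would collect the pieces: the feasible region is cut out by $n-1$ inequality constraints and one equality constraint over $n$ variables, and the objective is linear, so this is a linear program whose size is polynomial in $n$. Solving it (e.g.\ by an interior-point or the ellipsoid method) returns, in time polynomial in $n$, the cost-minimizing payment function with all required properties, or a certificate that no feasible payment function exists. The main obstacle is the equilibrium-certification step above; everything else reduces to verifying linearity and citing the polynomial-time solvability of linear programming.
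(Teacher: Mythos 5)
Your proposal is correct and follows essentially the same route as the paper: reduce the search to an LP over the $n$ payment parameters, with the inequalities of \cref{lem:simple}, the linear equilibrium equality $f'(f^{-1}(x))\cdot Q(m_i)=1$ (using the known binomial masses $z(t)$), and the linear objective (\ref{eq:obj}). Your explicit certification step --- deducing $Q>0$ and $f'(0)\cdot Q>1$ from strict concavity so that case (b) of \cref{lem:equilibrium} applies --- is a detail the paper leaves implicit, and it is a correct and welcome addition rather than a deviation.
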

Our approach can be extended to include additional constraints (e.g., non-negativity or monotonicity of payments), provided they can be expressed as linear constraints of the payment parameters. \cref{fig:minimal_payments} depicts four payment solutions obtained by solving the above LP for $n=100$ and the effort function $f(x)=1-\frac{e^{-x}}{2}$, and values of $x$ ranging from $51\%$ to $99\%$.

\definecolor{c1}{RGB}{68,1,84}
\definecolor{c4}{RGB}{94,201,98}
\definecolor{c3}{RGB}{31,163,134}
\definecolor{c2}{RGB}{55,89,140}
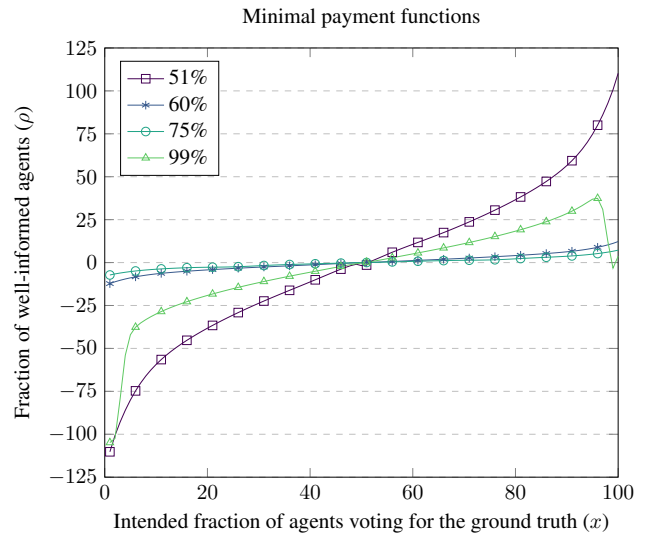
\begin{figure}
    \centering
    \resizebox{\columnwidth}{!}{
    \begin{tikzpicture}
    \begin{axis}[
        title={Minimal payment functions},
        xlabel={Intended fraction of agents voting for the ground truth ($x$)},
        ylabel={Fraction of well-informed agents ($\rho$)},
        xmin=0, xmax=100,mark repeat=5,
        ymin=-125, ymax=125,
        xtick={0,20,40,60,80,100},
        ytick={-150,-125,-100,-75,-50,-25,0,25,50,75,100,125,150},
        legend pos=north west,
        ymajorgrids=true,
        grid style=dashed,
    ]
    
    \addplot[
        color=c1,
        mark=square,
        ]
        coordinates {
        (1,-110.22327420537121)(2,-100.69195466897838)(3,-92.85301415026142)(4,-86.03560986138305)(5,-80.02197823010563)(6,-74.73251217802999)(7,-70.10484139879352)(8,-66.05925223813041)(9,-62.5021308247686)(10,-59.3417134131892)(11,-56.49933834891782)(12,-53.911976755962115)(13,-51.53025331361291)(14,-49.31555857230739)(15,-47.2375774997781)(16,-45.27243507218232)(17,-43.40131355485299)(18,-41.609385841788615)(19,-39.88497119345874)(20,-38.21885997594903)(21,-36.60377488373137)(22,-35.03394315986368)(23,-33.504758729282244)(24,-32.01251607862062)(25,-30.55420019027523)(26,-29.127318811855236)(27,-27.729764678555988)(28,-26.359696013178255)(29,-25.01542404975631)(30,-23.695297345500503)(31,-22.397575805676937)(32,-21.12029458461162)(33,-19.86113057581363)(34,-18.61730040513836)(35,-17.385531508117147)(36,-16.162143686068152)(37,-14.943244046079887)(38,-13.724976755489728)(39,-12.503723217922914)(40,-11.276199782340328)(41,-10.039603076576359)(42,-8.792223807386467)(43,-7.535073416910471)(44,-6.274884164045414)(45,-5.028130024551615)(46,-3.823607974998989)(47,-2.6984647091558642)(48,-1.6970071268126325)(49,-0.9907721838511669)(50,-1.7633572871764187)(51,-1.4450364017122688)(52,0.9051210563598069)(53,2.1817666359564214)(54,3.4561384655278538)(55,4.746271373334757)(56,5.999857426446411)(57,7.204675704554919)(58,8.369209536034024)(59,9.506057445477577)(60,10.626743137670207)(61,11.741083559567105)(62,12.857460814888032)(63,13.982921190096725)(64,15.123021302670953)(65,16.281684055880508)(66,17.461295162708304)(67,18.66305893663042)(68,19.88747172251245)(69,21.1347530976209)(70,22.405146433332355)(71,23.699080297572316)(72,25.017228561695127)(73,26.360517959421372)(74,27.73012231196051)(75,29.127467545003046)(76,30.554259378790462)(77,32.012538662697324)(78,33.50476702090202)(79,35.033946107142135)(80,36.60377590903587)(81,38.218860331992204)(82,39.884971321661524)(83,41.609385893680496)(84,43.40131358402531)(85,45.27243517748965)(86,47.23757755495858)(87,49.31555857244379)(88,51.53025331361848)(89,53.91197675596303)(90,56.49933834891788)(91,59.3417134131891)(92,62.5021308247686)(93,66.05925223813041)(94,70.10484139879352)(95,74.73251217802999)(96,80.02197823010563)(97,86.03560986138305)(98,92.85301415026142)(99,100.69195466897838)(100,110.22327420537121)
        };

    \addplot[
        color=c2,
        mark=asterisk,
        ]
        coordinates {
        (1,-12.238781672144164)(2,-11.182595079659224)(3,-10.313677257633579)(4,-9.557715724457646)(5,-8.890489477282248)(6,-8.303104990722215)(7,-7.788735349937528)(8,-7.338869174585101)(9,-6.943570548342665)(10,-6.592957437627303)(11,-6.2783405567597015)(12,-5.992592689208363)(13,-5.730044890256201)(14,-5.486237454938764)(15,-5.257673601424043)(16,-5.041614616964308)(17,-4.8359174278567645)(18,-4.638907369507594)(19,-4.449279213917474)(20,-4.266021163002234)(21,-4.088357774099843)(22,-3.9157083669224244)(23,-3.7476573420188117)(24,-3.583932067341152)(25,-3.4243828267392)(26,-3.268958384352075)(27,-3.117670879181779)(28,-2.970545782749674)(29,-2.8275566309301885)(30,-2.688549459973339)(31,-2.553167661446509)(32,-2.420795063457262)(33,-2.290544237186786)(34,-2.1613220165578344)(35,-2.0319867828317313)(36,-1.9015604664685184)(37,-1.7694043344773709)(38,-1.6352772591516476)(39,-1.4992771687390913)(40,-1.3617306664345268)(41,-1.223080726511201)(42,-1.0837919634401563)(43,-0.944294863599092)(44,-0.8049826609890509)(45,-0.6662861205597519)(46,-0.5289379336161022)(47,-0.39462125796505854)(48,-0.2671973685031359)(49,-0.1549358242779415)(50,-0.07733344057469438)(51,0.04452803628554669)(52,0.1445308145203219)(53,0.2571141780598012)(54,0.3808864282836444)(55,0.5112268901842412)(56,0.6444631493518993)(57,0.77806919465784)(58,0.910535790542931)(59,1.0411894572345823)(60,1.16998417614493)(61,1.2973263315551122)(62,1.4239495460854577)(63,1.5507992040365446)(64,1.6788976200847348)(65,1.8091875191261924)(66,1.942365341023792)(67,2.0787576457637678)(68,2.218315048816333)(69,2.360734917517571)(70,2.5056394163818614)(71,2.652721269346104)(72,2.801818780405575)(73,2.952929867485959)(74,3.1061909211973977)(75,3.2618418051649654)(76,3.4201908276465005)(77,3.5815885625273434)(78,3.7464156088114233)(79,3.9150857247145296)(80,4.088062896490703)(81,4.265889635593474)(82,4.449224205386964)(83,4.638885959401128)(84,4.835909773161072)(85,5.041612158117415)(86,5.257672916495795)(87,5.486237297742125)(88,5.730044862558652)(89,5.992592685951413)(90,6.278340556672721)(91,6.592957437700678)(92,6.943570548366998)(93,7.338869174590307)(94,7.78873534993847)(95,8.303104990722328)(96,8.890489477282328)(97,9.557715724457633)(98,10.313677257633579)(99,11.182595079659224)(100,12.238781672144164)
        };
        \addplot[
        color=c3,
        mark=o,
        ]
        coordinates {
        (1,-7.2286748093054145)(2,-6.603183444803278)(3,-6.088658736103817)(4,-5.64098132548435)(5,-5.245286205701512)(6,-4.895887775284087)(7,-4.588707345601121)(8,-4.319153076815775)(9,-4.082109012913337)(10,-3.8724742024352836)(11,-3.685650769335644)(12,-3.517853475578965)(13,-3.366288328575671)(14,-3.229259265872922)(15,-3.1062031143769744)(16,-2.9975813230421506)(17,-2.9045163804108074)(18,-2.828093525437776)(19,-2.768379378322357)(20,-2.723410559111354)(21,-2.6885675061282543)(22,-2.6567177054735085)(23,-2.619236926666735)(24,-2.5676617570792963)(25,-2.495531000668314)(26,-2.399958500022927)(27,-2.282301906560087)(28,-2.1474257367766443)(29,-2.002177432589992)(30,-1.8538553020196855)(31,-1.7088838763891414)(32,-1.57188215780315)(33,-1.4453068267416422)(34,-1.329652151661739)(35,-1.223997604359739)(36,-1.1266445752232404)(37,-1.035652868963121)(38,-0.9491950768239825)(39,-0.8657326554859508)(40,-0.7840620373128464)(41,-0.7032893383141468)(42,-0.6227837362389452)(43,-0.542145826260497)(44,-0.46121649090344863)(45,-0.3801477919831546)(46,-0.29956094642444286)(47,-0.22082620116376295)(48,-0.146509570170535)(49,-0.0810243806375488)(50,-0.031465399096098245)(51,0.03146425684421872)(52,0.08102239478145523)(53,0.14650451045583868)(54,0.22081283963416043)(55,0.29952670311147145)(56,0.3800634701387946)(57,0.46101690891800184)(58,0.541691211913107)(59,0.6217868443269856)(60,0.7011857404263144)(61,0.7797936121733506)(62,0.8574109850577725)(63,0.9336191889301226)(64,1.0076840508539402)(65,1.0784991991167927)(66,1.1446104031409934)(67,1.2043749646714905)(68,1.2563029480149337)(69,1.2995847541631826)(70,1.3347253162198)(71,1.3640985626402928)(72,1.3921664574286876)(73,1.4251554703723186)(74,1.470171885033122)(75,1.5339346018770428)(76,1.6213042061104057)(77,1.7338147053178852)(78,1.8689774676310544)(79,2.0209697649686946)(80,2.1822061239454236)(81,2.345164168358311)(82,2.5040199957659572)(83,2.6556650305341147)(84,2.799870719937661)(85,2.9387190504216747)(86,3.075687580082109)(87,3.214803891823929)(88,3.3601181193373897)(89,3.515536335205088)(90,3.6849189038235775)(91,3.872299343270762)(92,4.082089427794985)(93,4.319160937455089)(94,4.588713710454397)(95,4.895890506507115)(96,5.2452871462584065)(97,5.640981657809046)(98,6.088659095769978)(99,6.6031843008603355)(100,7.228674809364355)
        };
        
        \addplot[
        color=c4,
        mark=triangle,
        ]
        coordinates {
        (1,-104.85157324210462)(2,-102.39883028863952)(3,-79.24234923157131)(4,-54.02968128337184)(5,-42.01057831303496)(6,-37.63707787256836)(7,-35.17208937178961)(8,-33.215247163145904)(9,-31.49680965006507)(10,-29.947408601103163)(11,-28.53321247127379)(12,-27.23010781279506)(13,-26.01904846625105)(14,-24.884650689324722)(15,-23.81452328392572)(16,-22.7985917040225)(17,-21.828826947975543)(18,-20.89880116614618)(19,-20.00339352345211)(20,-19.138527631158965)(21,-18.300953144376578)(22,-17.48806684902013)(23,-16.697767939771175)(24,-15.928342082302436)(25,-15.17836915604691)(26,-14.446650104449924)(27,-13.732148929311156)(28,-13.033946462829297)(29,-12.351203089921038)(30,-11.683128059805398)(31,-11.028953424503351)(32,-10.38791099039255)(33,-9.759210995455932)(34,-9.142021569587724)(35,-8.535448455212304)(36,-7.938515043154391)(37,-7.350143637920857)(38,-6.769140199336983)(39,-6.19418692377762)(40,-5.623850450476754)(41,-5.056619125067035)(42,-4.4909922918838685)(43,-3.925661082282126)(44,-3.3598490916992887)(45,-2.7939315448963544)(46,-2.2305336577580803)(47,-1.6764219279088532)(48,-1.1455838782236931)(49,-0.6636974840892922)(50,-0.2731862269512195)(51,0.2731862269512195)(52,0.6636974840892922)(53,1.1455838782236931)(54,1.6764219279088532)(55,2.2305336577580803)(56,2.7939315448963544)(57,3.3598490916992887)(58,3.925661082282126)(59,4.4909922918838685)(60,5.056619125067035)(61,5.623850450476754)(62,6.19418692377762)(63,6.769140199336983)(64,7.350143637920857)(65,7.938515043154391)(66,8.535448455212304)(67,9.142021569587724)(68,9.759210995455932)(69,10.38791099039255)(70,11.028953424503351)(71,11.683128059805398)(72,12.351203089921038)(73,13.033946462829297)(74,13.732148929311156)(75,14.446650104449924)(76,15.17836915604691)(77,15.928342082302436)(78,16.697767939771175)(79,17.48806684902013)(80,18.300953144376578)(81,19.138527631158965)(82,20.003393523452132)(83,20.898801166146058)(84,21.828826947971535)(85,22.79859170395426)(86,23.814523295156214)(87,24.8846837666281)(88,26.019051410110524)(89,27.23010201471311)(90,28.533167156202786)(91,29.947101686863725)(92,31.494689063380086)(93,33.200260482205394)(94,35.06575380518058)(95,36.90527129794111)(96,37.41210920409582)(97,30.96315504142699)(98,12.204903117428765)(99,-3.2642757510561466)(100,3.712364404143216)
        };
        \legend{51\%,60\%,75\%,99\%};
        
    \end{axis}
    \end{tikzpicture}
    }
    \caption{Minimal payment functions that ensure the existence of a simple equilibrium with an $x$-fraction of the agents casting a $T$-vote on average, so that all agents have non-negative expected utility. The scenario uses $n=100$ and the effort function $f(x)=1-\frac{e^{-x}}{2}$. The payment functions obtained by solving the linear program from \cref{thm:lp} for $x\in\{0.51,0.6,0.75,0.99\}$ are shown. Each point $(x,y)$ on a curve means that an agent will receive a payment of $y$ if an $x$-fraction of the agents voted in the same way as they did. There is a marker for every fifth data point.}
    \label{fig:minimal_payments}
\end{figure}

\section{Computational experiments}\label{sec:experiments}
Our goal in this section is to justify that appropriate selection of the payment parameters can lead to correct adjudication in practice, even though Lemma~\ref{lem:mirror} shows the co-existence of both good and bad equilibria. The key property that favours good equilibria more often is that, in practice, jurors are on average closer to being well-informed than misinformed. Formally, this means that
$\frac{1}{n}\cdot \sum_{i\in [n]}{f_i(x)} > 1/2$ for every $x>0$.

Due to the lack of initial feedback, it is natural to assume that agents start their interaction by putting some small effort and convert their signal to a vote. We claim that this, together with their tendency to being well-informed, is enough to lead to probably correct adjudication despite strategic behaviour. We provide evidence for this claim through the following experiment implementing the scenario we considered in Section~\ref{sec:lp}.

We have $n$ agents, a $\rho$-fraction of whom are well-informed and the rest are misinformed. Agent $i$'s effort function is $f_i(x)=1-\frac{e^{-x}}{2}$ if she is well-informed and $f_i(x)=\frac{e^{-x}}{2}$ if she is misinformed. We consider the minimal payment functions, defined as the solution of the linear program detailed in the last section, parameterized by the fraction $x$ of agents intended to vote for the ground truth. A small subset of these payment functions can be seen in \cref{fig:minimal_payments}. In addition, we consider two different payment functions, both defined using a parameter $\omega>0$:
\begin{itemize}
    \item $p(x)=\omega$ if $x\geq 1/2$ and $p(x)=0$, otherwise. 
    \item $p(x)=\frac{\omega}{xn}$ if $x\geq 1/2$ and $p(x)=-\frac{\omega}{xn}$, otherwise.
\end{itemize}
With the first payment function, each agent gets a payment of $\omega$ if her vote is in the majority, while she gets no payment otherwise. With the second payment, the agents in the majority share an award of $\omega$, while the agents in the minority share a loss of $\omega$. Notice that both payment functions satisfy the conditions of \cref{lem:simple}. We will refer to them as {\em threshold} and {\em award/loss sharing} payment functions, respectively. 

In our experiments, we simulate the following dynamics of strategic play. Initially, all agents put an effort of $\epsilon>0$ and cast the signal they receive as vote. In subsequent rounds, each agent best-responds. In particular, the structure of the dynamics is as follows:
\begin{description}
\item[Round $0$:] Agent $i$ puts an effort of $\epsilon$ and casts her signal as vote.
\item[Round $j$, for $j=1, 2, ..., R$:] Agent $i$ gets $m_i$ as feedback. She decides her strategy $\beta_i\in \{0,1\}$ and effort level $\lambda_i\geq 0$. She draws her signal, which is alternative $T$ with probability $f_i(\lambda_i)$ and alternative $F$ with probability $1-f_i(\lambda_i)$. If $\beta_i=1$, she casts her signal as vote; otherwise, she casts the opposite of her signal as vote.
\end{description}
\definecolor{c1}{RGB}{68,1,84}
\definecolor{c4}{RGB}{94,201,98}
\definecolor{c3}{RGB}{31,163,134}
\definecolor{c2}{RGB}{55,89,140}
\begin{figure}[!ht]
    \centering
    \resizebox{\columnwidth}{!}{
    \begin{tikzpicture}
    \begin{axis}[
        title={Minimal payment functions},
        xlabel={Intended fraction of agents voting for the ground truth ($x$)},
        ylabel={Fraction of well-informed agents ($\rho$)},
        xmin=0, xmax=100,mark repeat=5,
        ymin=-160, ymax=160,
        xtick={0,20,40,60,80,100},
        ytick={-150,-125,-100,-75,-50,-25,0,25,50,75,100,125,150},
        legend pos=north west,
        ymajorgrids=true,
        grid style=dashed,
    ]
    
    \addplot[
        color=c1,
        mark=square,
        ]
        coordinates {
        (1,-150.97243507880225)(2,-137.97765668787252)(3,-127.29449686513627)(4,-118.00770731736328)(5,-109.82128378530712)(6,-102.62680445517914)(7,-96.33805720864052)(8,-90.84506613073006)(9,-86.01933582547417)(10,-81.73486330199)(11,-77.8831528741156)(12,-74.3770181423543)(13,-71.14840003416847)(14,-68.14456996934646)(15,-65.32456078789977)(16,-62.656375389922864)(17,-60.114926286257464)(18,-57.680501812111245)(19,-55.33759347904997)(20,-53.0739793921418)(21,-50.88000214272252)(22,-48.74799875958888)(23,-46.67185137339206)(24,-44.646633383685156)(25,-42.66833015625748)(26,-40.73361638416151)(27,-38.839674534381544)(28,-36.984040675180566)(29,-35.164466282380076)(30,-33.37878883346043)(31,-31.624812035758616)(32,-29.900209750709745)(33,-28.202484580075694)(34,-26.52902497748125)(35,-24.877297064184734)(36,-23.245155838018174)(37,-21.631148353459473)(38,-20.03452549712653)(39,-18.45456201115038)(40,-16.88886201883411)(41,-15.330765931887782)(42,-13.76681894753499)(43,-12.176297116462642)(44,-10.53539134326223)(45,-8.827519983952747)(46,-7.056895083588479)(47,-5.257710222767649)(48,-3.5038443379028976)(49,-2.002627290123975)(50,-1.8017597266871732)(51,0.3628713398617407)(52,2.6591543904426374)(53,4.080390295397784)(54,5.556687430351019)(55,7.122975836998215)(56,8.741302552426426)(57,10.389695816858545)(58,12.051748195219705)(59,13.710513982138991)(60,15.352289668085692)(61,16.970992527347942)(62,18.568879356754195)(63,20.15416816333934)(64,21.737590095758826)(65,23.32949712749765)(66,24.938236204469128)(67,26.56973388213725)(68,28.227852679679867)(69,29.915046186656145)(70,31.632990703246723)(71,33.383051772642986)(72,35.16657217425001)(73,36.98502850268639)(74,38.84011525528933)(75,40.73380371905405)(76,42.66840617741313)(77,44.64666291917726)(78,46.6718624072224)(79,48.74800275096066)(80,50.880003557542935)(81,53.0739798941364)(82,55.33759366468011)(83,57.6805018899106)(84,60.11492633199248)(85,62.656375562578376)(86,65.32456087895659)(87,68.14456996955622)(88,71.14840003417596)(89,74.37701814235555)(90,77.8831528741151)(91,81.73486330198999)(92,86.01933582547417)(93,90.84506613073006)(94,96.33805720864052)(95,102.62680445517914)(96,109.82128378530712)(97,118.00770731736328)(98,127.29449686513627)(99,137.97765668787252)(100,150.97243507880225)
        };

    \addplot[
        color=c2,
        mark=asterisk,
        ]
        coordinates {
        (1,-17.638251836557885)(2,-16.121027873558916)(3,-14.872826302121595)(4,-13.78696603036125)(5,-12.829020650623443)(6,-11.986538509469066)(7,-11.249790487554137)(8,-10.606258418870258)(9,-10.041106645443097)(10,-9.539634998660254)(11,-9.089115802682446)(12,-8.67929495899194)(13,-8.302147015209137)(14,-7.951437068264946)(15,-7.622944957275934)(16,-7.311634994425404)(17,-7.014574723365306)(18,-6.730560694615913)(19,-6.457288605506992)(20,-6.193370719317643)(21,-5.937787084351617)(22,-5.689832746424948)(23,-5.44909198890363)(24,-5.215423432449667)(25,-4.988942140399699)(26,-4.769981690393513)(27,-4.559017900510375)(28,-4.3565393507739305)(29,-4.162859718790514)(30,-3.9778839918927495)(31,-3.8008655857553157)(32,-3.630224486142486)(33,-3.463528753673655)(34,-3.297738723438575)(35,-3.1297192189712737)(36,-2.9568280725307257)(37,-2.7772131624767042)(38,-2.5895336752594744)(39,-2.392343118373793)(40,-2.1841647304902345)(41,-1.965122004019706)(42,-1.7384190750212878)(43,-1.5090075375250156)(44,-1.2810006054868586)(45,-1.0567903324668433)(46,-0.837888203115976)(47,-0.6265590666354313)(48,-0.4281093824814972)(49,-0.25449636864333236)(50,-0.13445202946415513)(51,0.051201677033446025)(52,0.20049017048337614)(53,0.36677636772304734)(54,0.5470382000318503)(55,0.7334614117920362)(56,0.9196629117945925)(57,1.101279232267978)(58,1.2762343534202159)(59,1.445139337394521)(60,1.611427965501981)(61,1.780095548033323)(62,1.9549170934014461)(63,2.137001393425232)(64,2.326181332001374)(65,2.5225496938457885)(66,2.7264297087523586)(67,2.937648978718839)(68,3.155159084250629)(69,3.3772513088737086)(70,3.6020942001230045)(71,3.82822943931616)(72,4.0548321286860585)(73,4.2817336163391095)(74,4.509296684321534)(75,4.73823642419409)(76,4.969450362652877)(77,5.2038915938439985)(78,5.442496852970018)(79,5.686166908651636)(80,5.935788486483254)(81,6.192284571548481)(82,6.456681930035689)(83,6.730188489876335)(84,7.014271655399645)(85,7.310276999729954)(86,7.621660043285573)(87,7.951432267432226)(88,8.30214671251358)(89,8.679294935617207)(90,9.08911580120332)(91,9.539634998668735)(92,10.041106645470924)(93,10.606258418877191)(94,11.24979048755542)(95,11.986538509469266)(96,12.829020650623558)(97,13.786966030361256)(98,14.872826302121595)(99,16.121027873558916)(100,17.638251836557885)
        };
        \addplot[
        color=c3,
        mark=o,
        ]
        coordinates {
        (1,-11.228542174366048)(2,-10.25716972637916)(3,-9.45816648330616)(4,-8.763101839056253)(5,-8.149311496883616)(6,-7.608343733785631)(7,-7.133969568597111)(8,-6.718784696550329)(9,-6.354323569843088)(10,-6.032258443515705)(11,-5.745598807145573)(12,-5.489585043265706)(13,-5.262244517702698)(14,-5.064517928654642)(15,-4.8998379746502625)(16,-4.773030092805748)(17,-4.688381627443128)(18,-4.646941146401543)(19,-4.643675623200379)(20,-4.665756833654683)(21,-4.693320062736779)(22,-4.702974629088165)(23,-4.672724333332662)(24,-4.586358412866636)(25,-4.436222163042021)(26,-4.224237422368109)(27,-3.961223201252979)(28,-3.6645038333468394)(29,-3.354170141851678)(30,-3.0489879082910667)(31,-2.7631831431833374)(32,-2.5049343125419448)(33,-2.276641942017889)(34,-2.07638768125038)(35,-1.8997779745043006)(36,-1.7415571988475431)(37,-1.5967054018880802)(38,-1.4610037742356088)(39,-1.3311950602111366)(40,-1.2049049111200207)(41,-1.080466854195671)(42,-0.956748296154819)(43,-0.833034062369129)(44,-0.7089991056963254)(45,-0.584795343778302)(46,-0.4612869199996239)(47,-0.340488663852764)(48,-0.2262808450539553)(49,-0.12545024379998804)(50,-0.0489632565262057)(51,0.04896170185202209)(52,0.1254475318833359)(53,0.22627394085477492)(54,0.34047042030954033)(55,0.4612399107993075)(56,0.5846782593532462)(57,0.7087171901947982)(58,0.8323776753601777)(59,0.9552716101578453)(60,1.0772612344789305)(61,1.1981998401536842)(62,1.3176989187807533)(63,1.434889683097993)(64,1.548174951614198)(65,1.6550060810193647)(66,1.7517696445702873)(67,1.8339199664936845)(68,1.8965212121467978)(69,1.935327014177441)(70,1.94838420507465)(71,1.937882155054644)(72,1.9116413392878657)(73,1.8834414618682223)(74,1.8715929426615743)(75,1.8958126820923265)(76,1.973241903240733)(77,2.114824666727266)(78,2.32300391439043)(79,2.591077910523566)(80,2.904207611946724)(81,3.2421439569886616)(82,3.583570518724848)(83,3.9109962069618827)(84,4.214280150804361)(85,4.49146196414307)(86,4.747184918236724)(87,4.990057299810407)(88,5.230222012389103)(89,5.477758872462305)(90,5.741972654990803)(91,6.031381953321605)(92,6.354177485335943)(93,6.718779402196558)(94,7.133977077352362)(95,7.608347548998266)(96,8.149312839785406)(97,8.76310231565919)(98,9.458166994493396)(99,10.257170859062288)(100,11.22854217445663)
        };
        
        \addplot[
        color=c4,
        mark=triangle,
        ]
        coordinates {
        (1,-153.29159432553112)(2,-104.69575039271423)(3,-126.3427215540618)(4,-72.92570211571606)(5,-53.451065015315585)(6,-47.85058670592181)(7,-44.59173231069414)(8,-41.949412408642345)(9,-39.647175769227196)(10,-37.607687826165275)(11,-35.7832092160096)(12,-34.13452955828363)(13,-32.628800500222795)(14,-31.23904547740878)(15,-29.94350804912895)(16,-28.724860199721515)(17,-27.569406986899608)(18,-26.466366672509736)(19,-25.407257600230608)(20,-24.385395431656605)(21,-23.395490822021156)(22,-22.43333247082128)(23,-21.4955398076667)(24,-20.579370980226315)(25,-19.682573963452672)(26,-18.803270829789568)(27,-17.93986720967802)(28,-17.090980632667478)(29,-16.255382772711954)(30,-15.431951671331174)(31,-14.619630841310336)(32,-13.81739283143481)(33,-13.024205437610329)(34,-12.238999370784116)(35,-11.460636958104372)(36,-10.687882528386703)(37,-9.919376758943518)(38,-9.1536197984469)(39,-8.388971970713484)(40,-7.623687126162078)(41,-6.856003490257473)(42,-6.084332080076127)(43,-5.3076063741360215)(44,-4.525893389625406)(45,-3.7414218858218593)(46,-2.9602646815154863)(47,-2.195017985025819)(48,-1.4689149017529015)(49,-0.8217547179792284)(50,-0.31751949109781563)(51,0.31751949109781563)(52,0.8217547179792284)(53,1.4689149017529015)(54,2.195017985025819)(55,2.9602646815154863)(56,3.7414218858218593)(57,4.525893389625406)(58,5.3076063741360215)(59,6.084332080076127)(60,6.856003490257473)(61,7.623687126162078)(62,8.388971970713484)(63,9.1536197984469)(64,9.919376758943518)(65,10.687882528386703)(66,11.460636958104372)(67,12.238999370784116)(68,13.024205437610329)(69,13.81739283143481)(70,14.619630841310336)(71,15.431951671331174)(72,16.255382772711954)(73,17.090980632667478)(74,17.93986720967802)(75,18.803270829789568)(76,19.682573963452672)(77,20.579370980226315)(78,21.4955398076667)(79,22.43333247082128)(80,23.395490822021156)(81,24.385395431656605)(82,25.407257600230615)(83,26.46636667250963)(84,27.569406986899086)(85,28.724860199709212)(86,29.943508048913397)(87,31.23904547390884)(88,32.62880044787207)(89,34.13452883890436)(90,35.78320016380814)(91,37.607583959191906)(92,39.646094037889796)(93,41.939214411527466)(94,44.50436606916391)(95,47.159685348851944)(96,48.26950897527338)(97,36.39918375859456)(98,-8.350271303945796)(99,23.288757780971114)(100,-13.27677003567071)
        };
        \legend{51\%,60\%,75\%,99\%};
        
    \end{axis}
    \end{tikzpicture}
    }
    \caption{Minimal payment functions computed using the approach of Section~\ref{sec:lp}, after relaxing \Cref{eq:minimal_eq_condition} to a lower bound inequality. The scenario uses $n=100$ and the effort function $f(x)=1-\frac{e^{-x}}{2}$. The payment functions obtained by solving the linear program from \cref{thm:lp} for $x\in\{0.51,0.6,0.75,0.99\}$ are shown. Each point $(x,y)$ on a curve means that an agent will receive a payment of $y$ if an $x$-fraction of the agents voted in the same way as they did. There is a marker for every fifth data point.}
    \label{fig:minimal_payments2}
\end{figure}
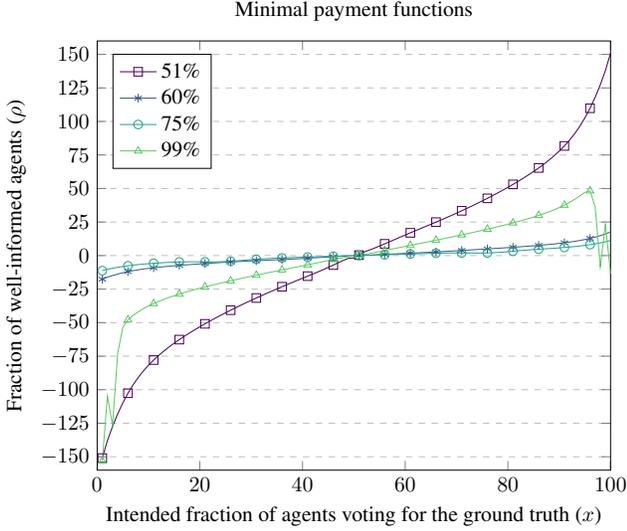
In each round after round $0$, agents get the exact value of $m_i$ as feedback (as opposed to its distribution)\footnote{An alternative implementation would assume that $m_i$ takes the number of $T$-votes in a randomly chosen previous round. The results obtained in this way are qualitatively similar to those we present here.} but maximize their expected utility with respect to the components $\lambda_i$ and $\beta_i$ of their strategy. Hence, the only difference with what we have seen in earlier sections is that the calculation of expected utility considers the actual value of payments and not their expectation, i.e.,
\begin{align*}
    \E[u_i(\lambda_i,\beta_i,m_i)]
    &= -\lambda_i+p\left(\frac{1+m_i}{n}\right)\\
    &\quad+(\beta_i(2f_i(\lambda_i)-1)-f_i(\lambda_i))\cdot Q(m_i),
\end{align*}
where 
\begin{align*}
    Q(m_i) &= p\left(\frac{1+m_i}n\right)-p\left(\frac{n-m_i}n\right).
\end{align*}
By applying \cref{lem:equilibrium}, we get the following characterization of the best-response of agent $i$ in round $j>0$.

\begin{cor}
    The best response of agent $i$ receiving feedback $m_i$ is as follows: 
    \begin{enumerate}
        \item[(a)] If $|Q(m_i)|\leq 2$, then $\lambda_i=0$ and $\beta_i$ can take any value in $[0,1]$. 
        \item[(b)] Otherwise, $\lambda_i=\ln{\frac{|Q(m_i)|}{2}}$.
        \begin{enumerate}
            \item[(b.1)] If agent $i$ is well-informed and $Q(m_i)>2$ or agent $i$ is misinformed and $Q(m_i)<-2$, then $\beta_i=1$.
            \item[(b.2)] If agent $i$ is misinformed and $Q(m_i)>2$ or agent $i$ is well-informed and $Q(m_i)<-2$, then $\beta_i=0$.
        \end{enumerate}
    \end{enumerate}
\end{cor}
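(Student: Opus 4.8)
The plan is to obtain this corollary as a direct specialization of \cref{lem:equilibrium} to the two explicit effort functions used in the experiments. The first thing I would do is record the derivatives: for a well-informed agent $f_i(x)=1-\tfrac{e^{-x}}{2}$ gives $f_i'(x)=\tfrac{e^{-x}}{2}$, while for a misinformed agent $f_i(x)=\tfrac{e^{-x}}{2}$ gives $f_i'(x)=-\tfrac{e^{-x}}{2}$. The single fact that drives the whole argument is that $|f_i'(0)|=\tfrac12$ in both cases; this is what converts the thresholds of \cref{lem:equilibrium}, which are stated in terms of $f_i'(0)\cdot Q(m_i)$, into the stated thresholds on $|Q(m_i)|$.

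With this in hand, case (a) is immediate: the condition $|f_i'(0)\cdot Q(m_i)|\le 1$ of \cref{lem:equilibrium}(a) becomes $\tfrac12|Q(m_i)|\le 1$, i.e.\ $|Q(m_i)|\le 2$, and the lemma gives $\lambda_i=0$ with $\beta_i$ free. For case (b), where $|Q(m_i)|>2$, I would split on the sign of $f_i'(0)\cdot Q(m_i)$ to decide between \cref{lem:equilibrium}(b) (which forces $\beta_i=1$) and \cref{lem:equilibrium}(c) (which forces $\beta_i=0$). Since $f_i'(0)>0$ for well-informed agents and $f_i'(0)<0$ for misinformed ones, the product $f_i'(0)\cdot Q(m_i)$ is positive exactly in the two scenarios of (b.1) (well-informed with $Q(m_i)>2$, or misinformed with $Q(m_i)<-2$) and negative exactly in the two scenarios of (b.2). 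This accounts for every $\beta_i$ assignment.

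It then remains to solve for $\lambda_i$. In cases (b) and (c) of \cref{lem:equilibrium} the optimal effort satisfies $f_i'(\lambda_i)\cdot Q(m_i)=\pm1$. Substituting $f_i'(\lambda_i)=\pm\tfrac{e^{-\lambda_i}}{2}$ and checking signs in each of the four sub-scenarios, I would verify that the equation always collapses to $\tfrac{e^{-\lambda_i}}{2}\,|Q(m_i)|=1$, i.e.\ $e^{-\lambda_i}=\tfrac{2}{|Q(m_i)|}$, whence $\lambda_i=\ln\tfrac{|Q(m_i)|}{2}$; this is well-defined and strictly positive precisely because $|Q(m_i)|>2$ in case (b). There is no genuine obstacle here; the only thing to watch is the sign bookkeeping across the four type/sign combinations, making sure that in the misinformed-with-$Q>2$ and well-informed-with-$Q<-2$ branches the two negative signs cancel so that $e^{-\lambda_i}=2/|Q(m_i)|$ emerges uniformly. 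Since $Q(m_i)$ is here the realized payment gap whose explicit form under the threshold and award/loss sharing functions has already been tabulated, no analytic work beyond this substitution is required.
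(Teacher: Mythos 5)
Your proposal is correct and matches the paper's derivation, which obtains the corollary precisely by specializing \cref{lem:equilibrium} to the two experimental effort functions, using $|f_i'(0)|=\tfrac12$ to turn the thresholds into $|Q(m_i)|\lessgtr 2$ and solving $f_i'(\lambda_i)\cdot Q(m_i)=\pm 1$ for $\lambda_i=\ln\tfrac{|Q(m_i)|}{2}$. The sign bookkeeping across the four type/sign combinations is exactly the intended argument, so nothing further is needed.
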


\begin{figure*}[t]
    \centering
        \includegraphics[scale=0.2375]{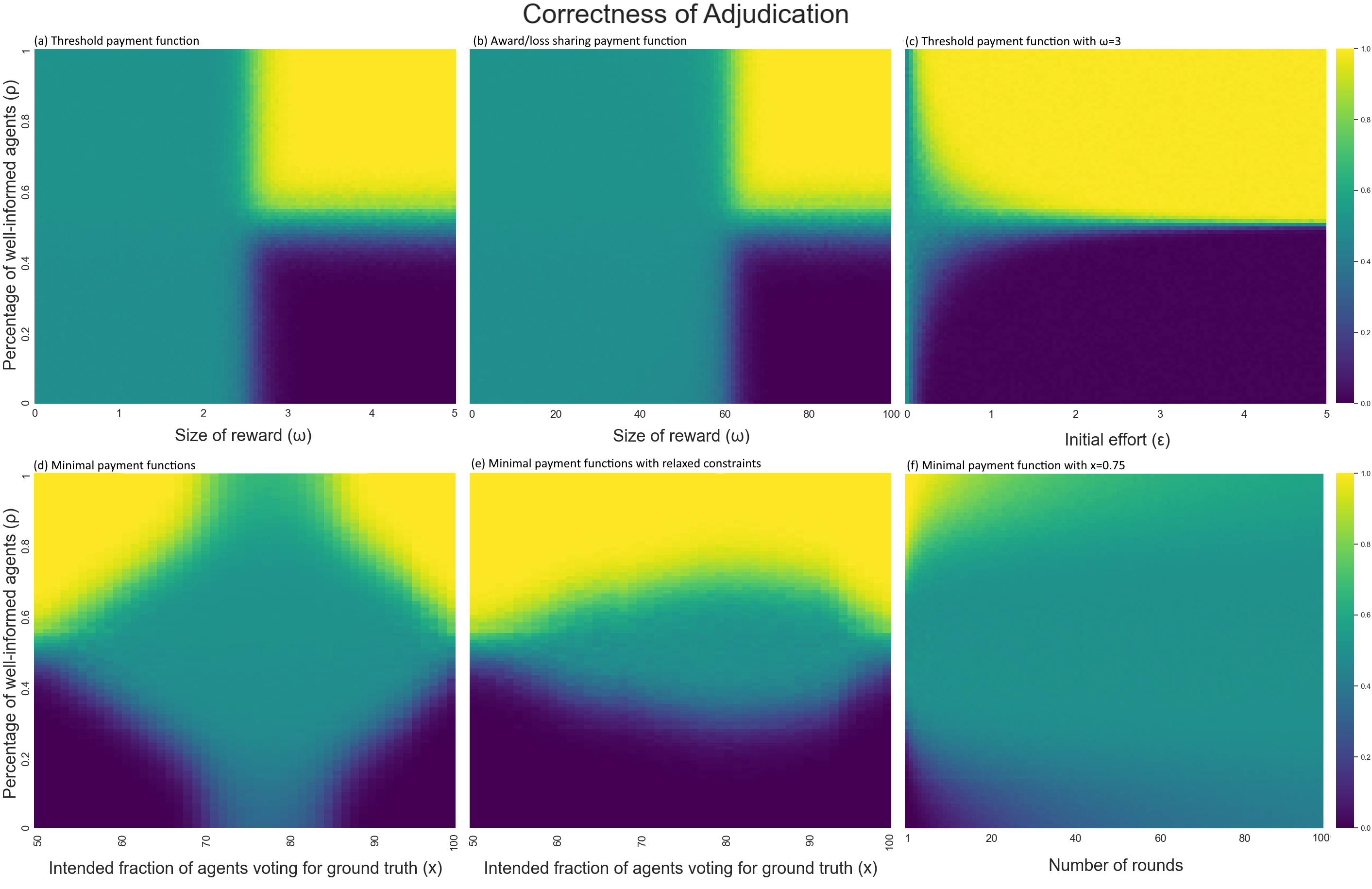} 
    \caption{Heatmap of the correctness of the adjudication, plotted with the fraction of well-informed agents on the y-axis, with six varying x-axes. In each plot, we run $R=50$ rounds with a jury of size $n=100$, using 1000 samples for each data point. The color of a data point indicates the average measured correctness with the given parameters, using the viridis color scale displayed in the legend on the right. Yellow  corresponds to good recovery, while dark blue corresponds to poor recovery of the ground truth, while random outcomes are represented by turquoise. The six x-axes are as follows: (a) Size of the reward for the threshold payment function, ranging from $\omega=0$ to $\omega=5$, with $\epsilon=1$. (b) Size of the reward for the award/loss sharing payment function, ranging from $\omega=0$ to $\omega=100$, with $\epsilon=1$. (c) The initial effort $\epsilon$, ranging from $\epsilon=0$ to $\epsilon=5$, with the payment function being the threshold payment function with $\omega=3$. (d) The intended fraction $x$ of agents voting for the ground truth, ranging from $x=0.51$ to $x=1$, with the payment function defined by \cref{thm:lp}.  (e) The intended fraction $x$ of agents voting for the ground truth, ranging from $x=0.51$ to $x=1$, with the payment functions obtained from \cref{thm:lp} by relaxing \cref{eq:minimal_eq_condition} to an inequality. (f) The number of rounds, ranging from $R=1$ to $R=100$, with the payment function being the minimal payment function with $x=0.75$ from \cref{fig:minimal_payments}.}
    \label{fig:payment1_rho_w}
\end{figure*}

In our experiments, we consider an agent population of fixed size $n=100$, with the fraction of well-informed agents ranging from $0$ to $1$. We simulate the dynamics described above for $R=50$ rounds and repeat each simulation $20$ times. For each experiment, we measure the frequency with which the majority of votes after the $R$-th round is for the ground truth alternative $T$. We do so for both the threshold and award/loss sharing payment functions, with parameter $\omega$ taking values ranging between $0$ and $5$ for the threshold payment functions and between $0$ and $100$ for the award/loss sharing one. We also consider the payment functions that arise as solutions to the linear programs considered in the previous section. In each experiment, we play with the values of two parameters simultaneously. We consider 100 values on each axis and plot the resulting data using a heatmap, with each data point corresponding to the average correctness observed during the experiment. We represent the correctness using the viridis color scale, with yellow points corresponding to a good recovery of the ground truth, and dark blue points corresponding to poor recovery. Random values are represented by turquoise points.

In the first experiment (\cref{fig:payment1_rho_w}.a), we consider the threshold payment function and vary the size of the reward $\omega$ and the fraction $\rho$ of well-informed agents. We consider a reasonably high starting effort of $\epsilon=1$, corresponding to a probability of $0.816$ of receiving the ground truth as signal. We observe two distinct regions as we vary the size of the payment. Initially, when the payment is too small (i.e. $\omega \leq 2.5$), the outcome of the adjudication is mostly random. When the payment increases above the threshold, we observe a {\em sharp phase transition} independent of $\rho$, where the correctness is extremified by the payment in the following sense: when $\rho$ is sufficiently large (respectively, small), the mechanism recovers the ground truth with high (respectively, low) probability. When $\rho\approx0.5$, we see that the outcome of the adjudication is mostly random. 

In the second experiment (\cref{fig:payment1_rho_w}.b), we consider the award/loss sharing payment function. The range of $\omega$ is changed from $[0,5]$ to $[0,100]$, as the latter constitutes the total award, while the former is the award per agent. All other parameters are kept the same. We obtain similar results as for the threshold payment function, i.e. the outcome is mostly random below a threshold above which we observe a sharp phase transition where the outcome of the mechanism is extremified. Here, the phase transitions happens when the total award is $\omega \approx 60$.

In the third experiment (\cref{fig:payment1_rho_w}.c), we observe the effect on the correctness by the initial effort. We fix the threshold payment function with $\omega=3$ such that mechanism has a chance to recover the ground truth, and let $\epsilon$ range from $0$ to $5$. We observe that, when $\epsilon$ is small, the outcome of the mechanism is mostly random, while the outcome quickly extremifies as $\epsilon$ increases. This means the mechanism only works if the agents initially put in sufficient effort. The results are similar for both the award/loss sharing payment function and the minimal payment functions.

In the fourth experiment (\cref{fig:payment1_rho_w}.d), we consider the payment functions obtained from \cref{thm:lp}. A subset of the payment functions we use are depicted in \cref{fig:minimal_payments}. Here, instead of varying the size of the reward, we vary the parameter $x$ used as input to the linear program. This parameter represents the intended fraction of agents voting for the ground truth at equilibrium. We let $x$ range from $0.51$ to $1$ in increments of $0.01$. Here, we observe that for $x$ close to $0.5$ and for $x$ close to 1, the mechanism is extremified, while for $x$ close to $0.75$ and $\rho$ close to $0.5$ the outcome of the mechanism is mostly random. This is rather unexpected since if a $0.75$-fraction of the agents vote for the ground truth, the majority vote will be for the ground truth almost certainly. Indeed, we observe that in these games when $\rho \approx 0.5$, the agents exert effort close to zero, hence producing the random outcome. We claim that despite this behavior, the ground truth is still an equilibrium, it is just not a stable equilibrium and the parties converge to the trivial equilibrium.

In a fifth experiment (\cref{fig:payment1_rho_w}.e), we consider a different set of minimal payment functions, obtained by relaxing the equality constraint \cref{eq:minimal_eq_condition} to a lower bound inequality. This has the effect of no longer requiring an exact $x$-fraction of the agents vote for the ground truth, but instead gives a lower bound on their number. This slightly changes the payment functions which can be seen in \cref{fig:minimal_payments2}, though they are qualitatively similar to those shown in \cref{fig:minimal_payments}. Here, we again vary the fraction $\rho$ of well-informed agents on the y-axis, and the intended fraction $x$ of agents voting for the ground truth, ranging from $x=0.51$ to $x=1$ in increments of $0.01$. However, we obtain different and considerably better results than those in \cref{fig:payment1_rho_w}.d. In particular, we obtain a good adjudication outcome for any $x$ when $\rho > 0.75$.

In our sixth and final experiment (\cref{fig:payment1_rho_w}.f), we aim to explain the enigmatic behaviour of the LP-computed payments for $x\approx 0.75$. We fix the payment function to be the minimal payment function with $x=0.75$ and vary the number of rounds from 1 round to 100 rounds. We do not take into account round 0 where all parties exert $\epsilon>0$ effort in the estimation of the correctness of the outcome. We observe that the outcome is extremified when the number of rounds is small and decays as we increase the number of round. We can explain this result by considering the payment function for $x=0.75$ in \cref{fig:minimal_payments} whose distribution is mostly flat when the outcome is close to being a tie. Here, the value of $Q(m_i)$ is small so the agent will lower the effort they exert, making it more likely that the outcome will be disputed. This creates a pull towards the trivial equilibrium. By contrast, the curves for $x\in\{0.51,0.99\}$ have a higher slope close to $0.5$, which makes this effect less pronounced. This explains why the adjudication outcome is mostly random for $x\approx 0.75$. By design, the linear program finds minimal payments that ensure there is an equilibrium where an $x$-fraction of the agents vote in favor of the ground truth. However, it does not constrain the solution to have the property that the good equilibrium is {\em stable}. In some sense, the fact that the non-trivial equilbrium is stable when $x$ is far from 0.5 is happenstance and begs the deeper question why the solutions to the linear program are of the form we observe. Intuitively, it makes sense that attaining a high accuracy requires large payments. A similar phenomenon seemingly holds for accuracies close to $0.51$ which can be explained informally as follows. Combinatorially, there are only a few ways to attain an accuracy of $0.51$ which necessitates the use of large punishment and rewards when the vote is close to being a tie. By contrast, for larger $\rho$, there are more ways to attain an accuracy of $0.75$ in the majority, hence loosening the requirements on the payments. This suggests that the case $x=0.75$ does not provide positive results in practice because of {\em instability of equilibria}. It would be interesting to explore whether it is possible to extend our approach with additional natural constraints that ensure the non-trivial equilibrium is also stable.

Our experiments suggest that several classes of payment functions can be used to recover the ground truth with high probability, provided the
agents are well-informed on average. Clearly, there is much work yet to be done in designing payment functions with desirable properties: while the threshold function and the award/loss sharing function seem to recover the ground truth reliably, it might be difficult in practice to pinpoint the location of the phase transition, as this requires estimating the effort functions used by actual jurors. The same holds true for the minimal payment functions.

\section*{Acknowledgments}
We would like to thank Luca Nizzardo, Irene Giacomelli, Matteo Campanelli, and William George for interesting discussions in several stages of this work. IC was partially supported by a research advisorship grant from Protocol Labs.

\newpage

\bibliographystyle{named}
\bibliography{refs}

\begin{thebibliography}{}

\bibitem[\protect\citeauthoryear{Arrow}{1951}]{A51}
Kenneth~J. Arrow.
\newblock {\em Social Choice and Individual Values}.
\newblock John Wiley \& Sons, 1951.

\bibitem[\protect\citeauthoryear{Brandt \bgroup \em et al.\egroup
  }{2016}]{comsoc}
Felix Brandt, Vincent Conitzer, Ulle Endriss, J{\'{e}}r{\^{o}}me Lang, and
  Ariel~D. Procaccia, editors.
\newblock {\em Handbook of Computational Social Choice}.
\newblock Cambridge University Press, 2016.

\bibitem[\protect\citeauthoryear{Caragiannis \bgroup \em et al.\egroup
  }{2016}]{CPS16}
Ioannis Caragiannis, Ariel~D. Procaccia, and Nisarg Shah.
\newblock When do noisy votes reveal the truth?
\newblock {\em {ACM} Transactions on Economics and Computation},
  4(3):15:1--15:30, 2016.

\bibitem[\protect\citeauthoryear{Conitzer and Sandholm}{2005}]{CS05}
Vincent Conitzer and Tuomas Sandholm.
\newblock Common voting rules as maximum likelihood estimators.
\newblock In {\em Proceedings of the 21st Conference on Uncertainty in
  Artificial Intelligence ({UAI})}, pages 145--152, 2005.

\bibitem[\protect\citeauthoryear{Faltings and Radanovic}{2017}]{FR17}
Boi Faltings and Goran Radanovic.
\newblock {\em Game Theory for Data Science: Eliciting Truthful Information}.
\newblock Morgan \& Claypool Publishers, 2017.

\bibitem[\protect\citeauthoryear{Gibbard}{1973}]{G73}
Allan Gibbard.
\newblock Manipulation of voting schemes: A general result.
\newblock {\em Econometrica}, 41(4):587--601, 1973.

\bibitem[\protect\citeauthoryear{Hoeffding}{1963}]{H63}
Wassily Hoeffding.
\newblock Probability inequalities for sums of bounded random variables.
\newblock {\em Journal of the American Statistical Association},
  58(301):13--30, 1963.

\bibitem[\protect\citeauthoryear{Lesaege \bgroup \em et al.\egroup
  }{2019}]{kleros_old}
Clément Lesaege, Federico Ast, and William George.
\newblock {Kleros Short Paper v1.0.7}.
\newblock Technical report, Kleros, 09 2019.

\bibitem[\protect\citeauthoryear{Lesaege \bgroup \em et al.\egroup
  }{2021}]{kleros}
Clément Lesaege, William George, and Federico Ast.
\newblock {Kleros Long Paper v2.0.2}.
\newblock Technical report, Kleros, 07 2021.

\bibitem[\protect\citeauthoryear{Posner and Weyl}{2018}]{PW18}
Eric~A. Posner and E.~Glen Weyl.
\newblock {\em Radical Markets: Uprooting Capitalism and Democracy for a Just
  Society}.
\newblock Princeton University Press, 2018.

\bibitem[\protect\citeauthoryear{Sandholm}{2003}]{S03}
Tuomas Sandholm.
\newblock Automated mechanism design: A new application area for search
  algorithms.
\newblock In {\em Proceedings of the 9th International Conference on Principles
  and Practice of Constraint Programming (CP)}, pages 19--36, 2003.

\bibitem[\protect\citeauthoryear{Satterthwaite}{1975}]{S75}
Mark.~A. Satterthwaite.
\newblock Strategy-proofness and arrow's conditions: Existence and
  correspondence theorems for voting procedures and social welfare functions.
\newblock {\em Journal of Economic Theory}, 10:187--217, 1975.

\bibitem[\protect\citeauthoryear{Schwartzbach}{2021}]{icbc}
Nikolaj~I. Schwartzbach.
\newblock An incentive-compatible smart contract for decentralized commerce.
\newblock In {\em Proceedings of the 2021 IEEE International Conference on
  Blockchain and Cryptocurrency (ICBC)}, pages 1--3, 2021.

\bibitem[\protect\citeauthoryear{Shah \bgroup \em et al.\egroup }{2015}]{SZP15}
Nihar~B. Shah, Dengyong Zhou, and Yuval Peres.
\newblock Approval voting and incentives in crowdsourcing.
\newblock In {\em Proceedings of the 32nd International Conference on
  International Conference on Machine Learning (ICML)}, page 10–19, 2015.

\bibitem[\protect\citeauthoryear{Witkowski \bgroup \em et al.\egroup
  }{2018}]{Witkowski18}
Jens Witkowski, Rupert Freeman, Jennifer Vaughan, David Pennock, and Andreas
  Krause.
\newblock Incentive-compatible forecasting competitions.
\newblock In {\em Proceedings of 32nd AAAI Conference on Artificial
  Intelligence (AAAI)}, 2018.

\bibitem[\protect\citeauthoryear{Young}{1988}]{Y88}
H.~Peyton Young.
\newblock Condorcet's theory of voting.
\newblock {\em American Political Science Review}, 82(4):1231--1244, 1988.

\end{thebibliography}

\end{document}